\newcommand{\cE}{\ensuremath{\mathcal{E}}}
\newtheorem{theorem}{Theorem}
\newtheorem{assumption}{Assumption}
\newtheorem{lemma}{Lemma}[section]
\title{Local Hypergraph Clustering using Capacity Releasing Diffusion}
\author{Rania Ibrahim and David F. Gleich}
\begin{document}

\marginnote[500pt]{\fontsize{7}{9}\selectfont
Rania Ibrahim, Purdue University\\
\url{ibrahimr@purdue.edu}\\
\noindent David F.~Gleich, Purdue University\\
\url{dgleich@purdue.edu}
}

\maketitle
\begin{abstract}
Local graph clustering is an important machine learning task that aims to find a well-connected cluster near a set of seed nodes. Recent results have revealed that incorporating higher order information significantly enhances the results of graph clustering techniques. The majority of existing research in this area focuses on spectral graph theory-based techniques. However, an alternative perspective on local graph clustering arises from using max-flow and min-cut on the objectives, which offer distinctly different guarantees. For instance, a new method called capacity releasing diffusion (CRD) was recently proposed and shown to preserve local structure around the seeds better than spectral methods. The method was also the first local clustering technique that is not subject to the quadratic Cheeger inequality by assuming a good cluster near the seed nodes. In this paper, we propose a local hypergraph clustering technique called hypergraph CRD (HG-CRD) by extending the CRD process to cluster based on higher order patterns, encoded as hyperedges of a hypergraph.  Moreover, we theoretically show that HG-CRD gives results about a quantity called motif conductance, rather than a biased version used in previous experiments. Experimental results on synthetic datasets and real world graphs show that HG-CRD enhances the clustering quality.
\end{abstract}


\section{Introduction}

Graph and network mining techniques traditionally experience a variety of issues as they scale to larger data~\citep{Gleich-2016-mining}. For instance, methods can take prohibitive amounts of time or memory (or both), or simply return results that are trivial. One important class of methods that has a different set of trade-offs are local clustering algorithms~\citep{andersen2006local}. These methods seek to apply a graph mining, or clustering (in this case), procedure around a seed set of nodes, where we are only interesting in the output nearby the seeds. In this way, local clustering algorithms avoid the memory and time bottlenecks that other algorithms experience. They also tend to produce more useful results as the presence of the seed is powerful guidance about what is relevant (or not). For more about the trade-offs of local clustering and local graph analysis, we defer to the surveys~\citep{fountoulakis2017optimization}. 

Among the local clustering techniques, the two predominant paradigms~\citep{fountoulakis2017optimization} are (i) spectral algorithms \citep{andersen2006local,chung2007random,mahoney2012local,spielman2013local}, that use random walk, PageRank, and Laplacian methodologies to identify good clusters nearby the seed and (ii) mincut and flow algorithms \citep{lang2004flow,andersen2008algorithm,orecchia2014flow,veldt2016simple,veldt2018flow}, that use parametric linear programs as well as max-flow, min-cut methodologies to identify good clusters nearby the seed. Both have  different types of trade-offs. Mincut-based techniques often better optimize objectives such as \emph{conductance} or \emph{sparsity} whereas spectral techniques are usually faster, but slightly less precise in their answers. These difference often manifest in the types of theoretical guarantees they provide, usually called \emph{Cheeger inequalities} for spectral algorithms. A recent innovation in this space of algorithms is the capacity releasing diffusion, which can be thought of as a hybrid max-flow, spectral algorithm in that it combines some features of both. This procedure provides excellent recovery of many node attributes in labeled graph experiments in Facebook, for example.

In a different line of work on graph mining, the importance of using higher-order information embedded within a graph or data has recently been highlighted in a number of venues~\citep{benson2016higher,tsourakakis2017scalable, li2017inhomogeneous}. In the context of graph mining, this usually takes the form of building a hypergraph from the original graph based on a motif~\citep{li2017inhomogeneous}. Here, a motif is just a small pattern, think of a triangle in a social network, or a directed pattern in other networks like a cycle or feed-forward loop patterns. The hypergraph corresponds to all instances of the motif in the network. Analyzing these hypergraphs usually gives much more refined insight into the graph. This type of analysis can be combined with local spectral methods too, as in the MAPPR method~\citep{yin2017local}. However, the guarantees of the spectral techniques are usually biased for large motifs because they implicitly, or explicitly, operate on a clique expansion representation of the hypergraph~\citep{li2017inhomogeneous}. 

In this paper, we present HG-CRD, a hypergraph-based implementation of the capacity releasing diffusion hybrid algorithm that combines spectral-like diffusion with flow-like guarantees. In particular, we show that our method provides cluster recovery guarantees in terms of the true motif conductance, not the approximate motif conductance implicit in many spectral hypergraph algorithms~\citep{benson2016higher, yin2017local}. The key insight to the method is establishing an algorithm that manages flow over hyperedges induced by the motifs. More precisely, if we use for illustration, a triangle as our desired motif, if a node $i$ wants to send a flow to node $j$ in the process, instead of sending the flow through the edge $(i, j)$, it will send the flow through the hyperedge $(i, j, k)$. This ensures that node $i$ sends flow to node $j$ that is connected to it via a motif and that nodes $i, j$ and $k$ are explored simultaneously. To show why is it important to consider higher order relations, we explore a similar metabolic network explored by Li and Milenkovic~\citep{li2017inhomogeneous}, where nodes represent metabolites and edges represent the interactions between metabolites. These interactions usually described by equations as $M1+M2 \rightarrow M3$, where M1 and M2 are the reactant and M3 is the product of the interaction. Our goal here is to group metabolite represented by node 7 with other nodes based on their metabolic interactions. Figure~\ref{fig:crd_hg_crd_example}  shows the used motif in HG-CRD and the graph to cluster. By considering this motif, HG-CRD separates three metabolic interactions, while CRD separates six metabolic interactions. Note that, for three node motifs, Benson et al.~\citep{benson2016higher} show that a weighted-graph called the motif adjacency matrix, suffices for many applications. We also consider this weighted approach via an algorithm called CRD-M (here, M, is for motif matrix), and find that the HG-CRD approach is typically, but not always, superior. 

Additionally, we show that HG-CRD gives a guarantee in terms of the exact motif-conductance score that mirrors the guarantees of the CRD algorithm. More precisely, if there is a good cluster nearby, defined in terms of motif-conductance, and that cluster is well connected internally, then the HG-CRD algorithm will find it. This is formalized in section \ref{sec:motif_conductance_analysis}. 

Finally, experimental results on both synthetic datasets and real world graphs for community detection show that HG-CRD has a lower motif conductance than the original CRD in most of the datasets and always has a better precision in all datasets. To summarize our contributions:
\begin{compactitem}
    \item We propose a local hypergraph clustering technique HG-CRD by extending the capacity releasing diffusion process to account for hyperedges in section~\ref{sec:higher_order_crd}.
    \item We show that HG-CRD is the first local higher order graph clustering method that is not subject to the quadratic Cheeger inequality in section~\ref{sec:motif_conductance_analysis} by assuming the existence of a good cluster near the seed nodes. 
    \item We compare HG-CRD to the original CRD and other related work on both synthetic datasets and real world graphs for community detection in section~\ref{sec:exp}.
\end{compactitem}

\begin{marginfigure}
        \centering
        \includegraphics[width=\linewidth]{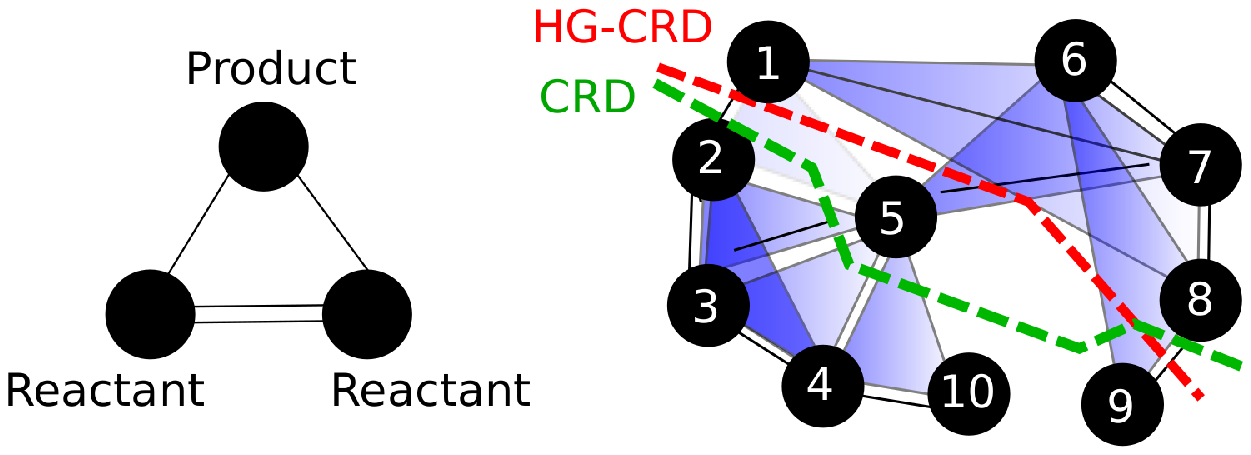}
       \caption{Running CRD and HG-CRD on the metabolic graph \citep{li2017inhomogeneous}, shows that CRD returns a set with motif conductance of  0.33 while HG-CRD has motif conductance of 0.27. (Algorithm parameters for reproducibility: $h = 2$, $C = 2$, $\tau = 2$, iterations $= 5$, $\alpha = 1$ and starting from node $7$.)}
       \label{fig:crd_hg_crd_example}
\end{marginfigure}

\section{Related Work}

While there is a tremendous amount of research on local graph mining, we focus our attention on the most closely related ideas that have influenced our ideas in this manuscript. This includes hypergraph clustering and higher-order graph analysis, local clustering, the capacity releasing diffusion method, and the MAPPR method. 

\subsection{Hypergraph Clustering and Higher-Order Graph Analysis }
It is essential to develop graph analysis techniques that exploit higher order structures, as shown by \cite{benson2018simplicial}, because these higher order structures reveal important latent organization characteristics of graphs that are hard to reveal from the edges alone. Several techniques \citep{benson2016higher,tsourakakis2017scalable, li2017inhomogeneous} are available for the general analysis of higher-order clustering or hypergraph clustering. The two are related because the higher-order structures, or motifs, as used in~\cite{benson2016higher}, can be assembled into a hypergraph representation where each hyperedge expresses the presence of a motif. 
In all three methods, the hypergraph information is re-encoded into a weighted graph or weighted adjacency matrix. For instance, in~\cite{benson2016higher}, they  construct an $n \times n$ motif adjacency matrix, where each entry ($i, j$) in the matrix represents the number of motif instances that contain both node $i$ and node $j$. Then, it uses the motif adjacency matrix as an input to the spectral clustering technique. This has good theoretical guarantees only for $3$ node motifs. 
Likely,~\cite{tsourakakis2017scalable} also constructs the same type of motif adjacency matrix. Using the motif adjacency matrix, Tectonic normalizes the edge weights of the motif adjacency matrix by dividing over the degree of the motif nodes. Finally, to detect the clusters, Tectonic removes all edges with weight less than a threshold $\theta$. 
More recently,~\cite{li2017inhomogeneous} generalize the previous results by assigning different costs for different partitions of a hyperedge before reducing the hypergraph to a matrix. They also show that the returned cluster conductance has a quadratic approximation to the optimal conductance and give the first bound on the performance for a general sized hyperedge. See another recent paper~\cite{veldt2020hypergraph} for additional discussion and ideas regarding hypergraph constructions and cuts.

\emph{In contrast, in our work here, our goal is an algorithm that directly uses the hyperedges without the motif-adjacency matrix construction entirely, i.e. the HG-CRD method.} We use the motif-weighted adjacency matrix (CRD-M) solely for comparison.

\subsection{Local Clustering}

As mentioned in the introduction, local clustering largely splits along the lines of spectral approaches -- those that use random walks, PageRank, and Laplacian matrices -- and mincut approaches -- those that use linear programming and max-flow methodologies~\citep{fountoulakis2017optimization}. For instance, approximate Personalized PageRank approaches due to~\cite{andersen2006local} are tremendously successful in revealing local structure in the graph nearby a seed set of vertices. Examples abound and include the references.~\citep{leskovec2009community,kloumann2014community,zhu2003semi,gleich2015using,pan2004-cross-modal-discovery}. We elaborate more on these below because we use a motif-weighted local clustering algorithm MAPPR as a key point of comparison with this class of techniques.  

Mincut based techniques~\citep{lang2004flow,andersen2008algorithm,orecchia2014flow} form a sequence or a parametric linear program that is, under strong assumptions, capable of optimally solving for the \emph{best} local set in terms of objectives such as set conductance (number of edges leaving divided by total number of edges) and set sparsity (number of edges leaving divided by number of vertices). These can be further localized \citep{orecchia2014flow,veldt2016simple,veldt2018flow} by incorporating additional objective terms to find smaller sets. As mentioned, a recent innovation is the CRD algorithm, which presents a new set of opportunities. We explain more about that algorithm below. Related ideas exist for metrics such as modularity~\citep{clauset2005finding}. 


\subsection{MAPPR} 
Motif-based approximate personalized PageRank (MAPPR) \citep{yin2017local} is a local higher order graph clustering technique that generalizes approximate personalized PageRank (APPR) \citep{andersen2006local} to cluster based on motifs instead of edges. MAPPR was proven to detect clusters with small motif conductance and with a running time that depends on the size of the cluster. Experimental results on community detection show that MAPPR outperforms APPR in the quality of the detected communities. Like the existing work, MAPPR uses the motif-weighted adjacency matrix. Again, our goal is to avoid this construction, although we use it for comparison. 
\subsection{Capacity Releasing Diffusion}

\cite{wang2017capacity} proposed a new diffusion technique called capacity releasing diffusion (CRD), which is faster and stays more localized than spectral diffusion methods. Additionally, CRD is the first diffusion method that is not subject to the quadratic Cheeger inequality. The basic idea of the CRD process is to assume that each node has a certain capacity and then start with putting excess of flow on the seed nodes. After that, let the nodes transmit their excess of flow to other nodes according to a similar push/relabel strategy proposed by~\cite{goldberg1988new}. If all the nodes end up with no excess of flow, then there was no bottleneck that kept the flow contaminated and therefore CRD repeats the same process again while doubling the flow at the nodes that are visited. On the other hand, if at the end of the iteration, we observed that too many nodes have too much excess (according to some parameter), then we have hit a bottleneck and we should stop and return the cluster. The cluster will be identified by the nodes that have excess of flow at the end of the iteration. In this work, we extend CRD process to consider clustering based on higher order structures.

\subsection{Key differences with our contribution}
In this work, we aim to extend capacity releasing diffusion process (CRD) to account for clustering based on higher order structures instead of edges. As CRD was shown to stay more localized than spectral diffusions, higher order techniques based on CRD will also stay more localized than spectral-based higher order techniques like MAPPR, this is the main motivation of why we choose to extend the CRD process. Furthermore, we discuss this localization property in more details in section~\ref{sec:motivation_example_localized}.

\section{Local Cluster Quality}

Two important measures to quantify a set $S$ as a cluster in the graph $G$ are conductance \citep{Schaeffer-2007-clustering} and motif conductance~\citep{benson2016higher}. We define these
here, as well as reviewing our general notation. 

Scalars are denoted by small letters (e.g., $m, n$), sets are shown in capital letters (e.g., $X$, $Y$), vectors are denoted by small bold letters (e.g., $\vf, \vg$), and matrices are capital bold ($\mA, \mB$). Given an undirected graph $G = (V, E)$, where $V$ is the set of nodes and $E$ is the set of edges, we use $\vd(i)$ to denote the degree of the vertex $i$, and we use $n$ to be the number of nodes in the graph. Additionally, let $\ve$ be the vector of all ones of appropriate dimension. 

 \textbf{Conductance} captures both how well-connected set $S$ is internally and externally and it is defined in terms of the \emph{volume} and \emph{cut} of the set. The volume of set $S$ is $\text{vol}(S) = \sum_{v \in S} \vd(v)$. The cut of set $S$ is the set of all edges where one end-point is in $S$ and the other is not. This gives $\text{cut}(S) = \{ \{u, v \} \in E: u \in S \text{ and } v \in \bar{S}\}$. Then the conductance of the set $S$ is defined as:
\begin{align*}
\phi(S) = \frac{|\text{cut}(S)|}{\text{minvol}(S)},
\end{align*}
where $\text{minvol}(S) = \text{min}(\text{vol}(S), \text{vol}(V - S))$. For weighted graphs, the volume uses the weighted degrees and the $|\text{cut}(S)|$ is the sum of edge weights cut. 

\textbf{Motif conductance} as defined by~\cite{benson2016higher} is a generalization of the conductance to measure the clustering quality with respect to a specific motif instance, it is defined as:
\begin{align*}
\phi_M(S) = \frac{|\text{cut}_M(S)|}{\text{minvol}_M(S)} ,
\end{align*}
where $\text{cut}_M(S)$ is the number of motif instances that have at least one node in $S$ and at least one node in $\bar{S}$, $\text{vol}_M(S)$ is the number of motif instance end points in $S$ and $\text{minvol}_M(S) = \text{min}(\text{vol}_M(S), \text{vol}_M(V - S))$. Additionally, we will define the size of the motif $k$ as the number of nodes in the motif.

\textbf{The motif hypergraph} is built on the same node set as $G$ where each hyperedge represents an instance of a single motif. For instance, the motif hypergraph for Figure~\ref{fig:crd_hg_crd_example} has edges: \{1,2,5\}, \{2,3,5\}, \{3,4,5\}, \{4,5,10\}, \{2,3,4\}, \{1,6,7\}, \{1,7,8\}, \{5,6,7\} and \{6,8,9\}. The cut of a set of vertices $S$ in a hypergraph is the set of hyperedges that have an end point in $S$ and another end point outside of $S$. Thus, the motif cut $|\text{cut}_M(S)|$ and the hypergraph cut in the motif hypergraph are identical. There are some subtleties in the definition of degree and volume, which we revisit shortly.

\section{A Capacity Releasing Diffusion for Hypergraphs via Motif Matrices} \label{sec:higher_order_crd}

Our main contribution is the HG-CRD algorithm that avoids the motif-weighted adjacency matrix. We begin, however, by describing how CRD can already be combined with the motif-weighted adjacency matrix in the CRD-M algorithm. For three node motifs, this offers a variety of theoretical guarantees, but there are significant biases for larger motifs that we are able to mitigate with our HG-CRD algorithm.


One straightforward idea to extend capacity releasing diffusion to cluster based on higher order patterns is to construct the motif adjacency matrix $\mW_M$ as described by~\cite{benson2016higher}, where the entry $\mW_M(i, j)$ equals to the number of motif instances that both node $i$ and node $j$ appear in its nodes. After constructing the motif adjacency matrix, we will run the CRD process on the graph represented by the motif adjacency matrix $\mW_M$. As the CRD process does not take into account the weight of the edges, we consider two variations, the first one is to duplicate each edge $(u, v)$ $\mW_M(u, v)$ times. This results in a multigraph, where the CRD algorithm can be easily adapted and the second variation is to multiply the weight of the edge $(u, v)$ used in the CRD process by $\mW_M(u, v)$. We report the second variation in the experimental results as it has the best $F_1$ in all datasets. Based on~\cite{benson2016higher} proof, when the size of the motif is three, then the motif conductance will be equal to the conductance of the graph represented by the motif adjacency matrix. As the CRD process guarantees that the conductance of the returned cluster is $O(\phi)$ where $\phi$ is a parameter input to the algorithm that controls both the maximum flow that can be pushed along each edge and the maximum node level during the push-relabel process. This extension guarantees that when the motif size is three, the motif conductance of the returned cluster is $O(\phi)$. In the rest of the paper, we will propose another extension for higher order capacity releasing diffusion that has a motif conductance guarantee of $O(k \phi)$ for any motif size $k$.

\subsection{A true hypergraph CRD}


Let $H = (V, \cE)$ be a hypergraph where the hyperedges $\cE$ represent motifs. For this reason, we will use \emph{motif} and \emph{hyperedge} largely interchangeably. The basic idea in CRD is to push flow along edges until a bottleneck emerges. In HG-CRD, the basic idea is the same. We push the flow across hyperedges as much as we can until there is too much excess flow on the nodes, which means nodes are unable to push much flow to hyperedges containing them. This will happen because the algorithm has reached a bottleneck and hence the algorithm has identified the desired cluster with respect to the high order pattern or motif. As such, the HG-CRD procedure is highly algorithmic as it implements a specific procedure to move \emph{flow} around the graph to identify this bottleneck. 

The input to the HG-CRD algorithm is a hypergraph H, a seed node $s$, and parameters $\phi, \tau, t, \alpha$. The parameter $t$ controls the number of iterations of the procedure, which is largely correlated with how large the final set should be. The parameter $\tau$ controls what is too much excess flow on the nodes and the parameter $\alpha$ controls whether we can push flow along the hyperedge or not. The value $\phi$ is the target value of motif conductance to obtain. That is, $\phi$ is an estimate of the bottleneck. Note that much of the theory is in terms of the value of $\phi$, but the algorithm uses the quantity $C = 1/\phi$ in many places instead. The output of the algorithm is the set $A$ representing the cluster of node $s$.

We begin by providing a set of quantities the algorithm manipulates and a simple example.

\subsection{Node and Hyperedge variables in HG-CRD}
Each node $v$ in the graph will have four values associated with it, which are: the degree of the node $\vd_M(V)$, the flow at the node $\vm_M(v)$, the excess flow at the node \textbf{ex}(v), and the node level $\vl(v)$. In the hypergraph, we define the degree as follows: 
\begin{align*}
\vd_M(v) = (k-1) \sum_{e \in \cE} \begin{cases} 1 & v \in e \\ 0 & \text{otherwise.} \end{cases}
\end{align*}
 That is, degree $\vd_M(v)$ of vertex $v$ will be $(k-1) \times~ $ the number of hyperedges containing $v$.

A node $v$ has excess flow if and only if $\vm_M(v) \geq \vd_M(v) + (k-1)$ and in this case we will call node $v$ to be \textit{active}. Additionally, we will define excess of flow $\textbf{ex}(v)$ at node $v$ to be max($\vm_M(v) - \vd_M(v)$, 0) and therefore, we visualize each node $v$ to have a capacity of $\vd_M(v)$ and any extra flow than this is excess. We will also restrict the node level to have a maximum value $h = \frac{3 \text{log} (\ve^T \vm_M)}{\phi}$. If $\vl(v)$ reach $h$, then node $v$ cannot send or receive any flow.

Moreover, each hyperedge $e = (v, u_1, ... u_{k-1})$ will have a capacity $C = \frac{1}{\phi}$, where $\phi$ is a parameter input to the algorithm and $e$ will have a flow value associated with it $\vm_M(e)$, which represents how much flow is pushed along this hyperedge. To determine if we can push more flow through this hyperedge or not, we will have a residual capacity variable with each hyperedge and it will be defined as $r(e) = \text{min}(\vl(v), C) - \vm_M(e)$. In the process, we can push flow through the hyperedge $e = (v, u_1, ... u_{k-1})$ if and only if it is an eligible hyperedge, where an eligible hyperedge is defined as: 
\begin{itemize}
    \item $\vl(v) > \vl(u_i) \text{ for at least } \alpha \text{ } u_i \in e \setminus v$, where $\alpha \in [1, k-1]$.
    \item $r(e) > 0$.
\end{itemize}


\begin{fullwidthtable}
\noindent\begin{minipage}[h] {0.6\linewidth} 
\label{alg:crd_inner_algorithm}
\small
\DontPrintSemicolon
\SetAlgoLined
\hspace{0.1cm} \nl \SetKwFunction{Main}{HG-CRD Algorithm}
\Main{G, s, $\phi$, $\tau$, t, $\alpha$} {
	
\hspace{0.1cm} \nl	$\vm_M \leftarrow 0 ;$ $\vm_M(s) \leftarrow \vd_M(s);$ $A \leftarrow \{ \};$ $\phi^* \leftarrow 1$ \;
	
\hspace{0.1cm} \nl	\ForEach{$j = 0, ..., t$} {
	
\hspace{0.1cm} \nl		$\vm_M \leftarrow 2 \vm_M;$ 
		
\hspace{0.1cm} \nl $\vl \leftarrow$ HG-CRD-inner($G$, $\vm_M$, $\phi$,  $\alpha$); 
		
\hspace{0.1cm} \nl		$Kj \leftarrow $ Sweepcut($G$, $\vl$)\;
		
\hspace{0.1cm} \nl		\textbf{if} $\phi_M(Kj) < \phi^*$ \textbf{then} $A \leftarrow Kj;$ $\phi^* \leftarrow \phi_M(Kj)$ \; 

\hspace{0.1cm} \nl $\vm_M \leftarrow \text{min}(\vm_M, \vd_M)$\;
		
\hspace{0.1cm} \nl		\textbf{if} $\ve^T \vm_M \leq \tau (2 \vd_M(s) 2^j)$ \textbf{then} Return $A$ \;
	}

\hspace{0.1cm} \nl    Return $A$ \;
}

 \SetKwFunction{Main}{HG-CRD-inner}

\hspace{0.1cm} \nl \Main{$G$,$\vm_M$, $\phi$, $\alpha$} { 

\hspace{0.1cm} \nl $\vl = 0;$ $\textbf{ex} = 0;$ $Q = \{ v | \vm_M(v) \geq \vd_M(v) + (k-1)\};$ 

\hspace{0.1cm} \nl $h = \frac{3 \text{log} (\ve^T \vm_M)}{\phi}$ \;

	\hspace{0.1cm} \nl \While{Q is not empty}{
		
\hspace{0.1cm} \nl		$v \leftarrow$ Q.get-lowest-level-node(); 
		
\hspace{0.1cm} \nl		$e = (v, u_1, ..., u_{k-1}) \leftarrow $ pick-eligible-hyperedge(v);
		
\hspace{0.1cm} \nl		pushed-flow $\leftarrow false$ \;
        
\hspace{0.1cm} \nl        \If {$e \neq \{ \}$} {

\hspace{0.1cm} \nl  \hspace{-0.1cm} $\Psi = \text{min} (\lfloor \frac{\textbf{ex}(v)}{k-1} \rfloor, r(e), 2 \vd_M(u_i) - \vm_M(u_i) \text{ } \forall u_i);$ 

\hspace{0.1cm} \nl $\vm_M(e) \leftarrow \vm_M(e) + \Psi$ 
    	    
\hspace{0.1cm} \nl    	    $\vm_M(v) \leftarrow \vm_M(v) - (k-1) \Psi; $
    	    
 \hspace{0.1cm} \nl   	    $\vm_M(u) \leftarrow \vm_M(u) + \Psi \text{ } \forall u_i \in e \setminus v$ 		
	    
 \hspace{0.1cm} \nl   		\uIf{$\Psi \neq 0$} { 
  \hspace{0.1cm} \nl  		
    		    pushed-flow $\leftarrow true$; 
    		    
\hspace{0.1cm} \nl    		    \textbf{ex}(v) $\leftarrow$ max($\vm_M(v) - \vd_M(v),0$) \;
    			
 \hspace{0.1cm} \nl   			\textbf{if} $\textbf{ex}(v) < k-1$ \textbf{then } Q.remove(v)  \;
    
 \hspace{0.1cm} \nl   			\For{$u_i \in e$} {
    			
  \hspace{0.1cm} \nl  			\textbf{ex}($u_i$) $\leftarrow$ max($\vm_M(u_i) - \vd_M(u_i),0$) \;
    			
  \hspace{0.1cm} \nl  			\textbf{if} $\textbf{ex}(u_i) \geq (k-1)$ \text { and } $\vl(u_i) < h$ \textbf{then} 
  
  \hspace{0.1cm} \nl Q.add($u_i$)\;
    			}
    			
    		}
		} 
\hspace{0.1cm} \nl		\If{not pushed-flow}  {
			
\hspace{0.1cm} \nl			$\vl(v) \leftarrow \vl(v) + 1$ \textbf{if} $\vl(v)$ == h \textbf{then } Q.remove(v) 
		}
	}
	}
\end{minipage}
\begin{minipage}[h]{0.4\linewidth}
\footnotesize
\centering
\caption{Description of used terms.}
\begin{tabularx}{\linewidth} {p{1.2cm}X}  \toprule
Term & Definition \\
\midrule
$\vd_M(v)$ & motif-based degree of node $v$\\
\addlinespace
$\vm_M(v)$ & flow at node $v$\\
\addlinespace
$\textbf{ex}(v)$ & excess flow at node $v$\\
\addlinespace
$\vl(v)$ & level of node $v$ \\
\addlinespace
$h$ & maximum level of any node\\
\addlinespace
$C$ & maximum capacity of any edge\\
\addlinespace
$\vm_M(e)$ & flow at edge $e$\\
\addlinespace
$r(e)$ & residual capacity of edge $e$\\
\addlinespace
$s$ & seed node\\
\addlinespace
$t$ & maximum number of HG-CRD inner calls\\
\addlinespace
$\tau$ & controls how much is much excess\\
\addlinespace
$\alpha$ & controls the eligibility of hyperedge \\
\addlinespace
$\phi$ & controls the values of $C$ and $h$ \\ \addlinespace
$\ve$ & vector of all ones\\
\addlinespace
$\text{vol}_M(B)$ & the number of motif instances end points in $B$ \\
\addlinespace
$\text{volM}(B)$ & $\sum_{v \in B} d_M(v)= (k-1) \text{vol}_M(B)$\\
\addlinespace
$\phi_M(B)$ & motif conductance of $B$\\
\addlinespace
$K_j$ & set of nodes in the cluster in the jth iteration\\
\bottomrule
\end{tabularx}
\end{minipage}
\end{fullwidthtable}

\subsection{The high-level algorithm}

The exact algorithm details which are adjusted from the pseudocodes in~\cite{wang2017capacity} to account for the impact of hyperedges, are present in HG-CRD Algorithm. In this section, we summarize the main intuition for the algorithm. At the start, the flow value $\vm_M(s)$ at the seed node $s$ will be equal to $2 \vd_M(s)$ and in each iteration,  HG-CRD will double the flow on each visited node. Each node $v$ that has excess flow picks an \textit{eligible} hyperedge $e$ that contains the node $v$ and sends flow to all other nodes in the hyperedge $e$. After performing an iteration (which we will call it HG-CRD-inner in the pseudocode,) we will remove any excess flow at the nodes (step 8 in pseudocode). If all nodes do not have excess flow in all iterations (no bottleneck is reached yet) and as we double the flow at each iteration $j$, then the total sum of flow at the nodes will be equal to $2 \vd_M(s) 2^j$. However, if the total sum of flow at the nodes is significantly (according to the parameter $\tau$) less than $2 \vd_M(s) 2^j$, then step 8 has removed a lot of flow excess at the nodes because many nodes had flow excess at the end of the iteration. This indicates that the flow was contaminated and we have reached a bottleneck for the cluster. Finally, we will obtain the cluster $A$ by performing a sweep-cut procedure. Sweep-cut is done by sorting the nodes in descending order according to their level values, then evaluating the conductance of each prefix set and returning the cluster with the lowest conductance. This will obtain a cluster with motif conductance of $O(k \phi)$ as shown in section~\ref{sec:motif_conductance_analysis}.

\subsection{HG-CRD Toy Example}

\begin{fullwidthfigure}
\includegraphics[width=\linewidth]{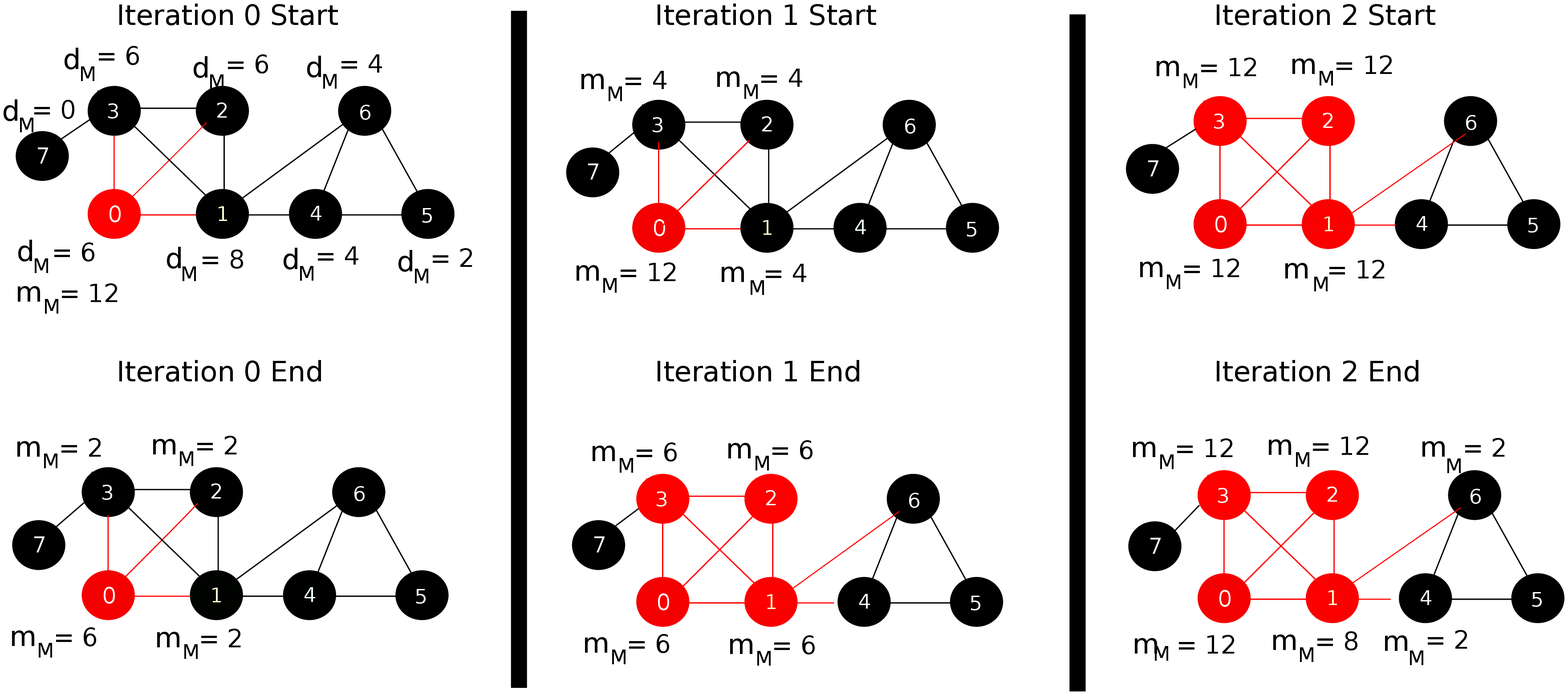}
\caption{Explanation of hypergraph CRD (HG-CRD) steps on a toy example when we start the diffusion process from node $0$ and where each hyperedge (triangle in this case) has a maximum capacity of two. Note that red nodes are nodes with excess of flow, while black nodes are nodes with no excess. At the end, HG-CRD was able to highlight the correct cluster containing nodes 0, 1, 2 and 3.}
\label{fig:crd_toy_example}
\end{fullwidthfigure}

In this section, we will describe the hypergraph CRD dynamics on a toy example. Figure~\ref{fig:crd_toy_example} shows the motif-based degree and the flow for each node of the graph at each iteration. We will start the process from the seed node $0$ and let us choose the triangle as the motif we would like to consider in our clustering process. Additionally, we will fix each hyperedge maximum capacity $C$ to be two. At iteration $0$, node $0$ will increase its level by one in order to be able to push flow, then it will pick hyperedges $\{0,1,3\}$, $\{0,1,2\}$ and $\{0,2,3\}$ and push one unit of flow to each one of them. After that at iteration $1$, each node will double its flow value. In this iteration, only node $0$ has excess of flow and therefore it will again pick hyperedge $\{0,1,3\}$, $\{0,1,2\}$ and $\{0,2,3\}$ and push one unit of flow to each one of them. Similarly at iteration 2, each node will double its flow value. In this iteration, nodes $0, 1, 2$ and $3$ have excess of flow and therefore node $1$ will send two units of flow to hyperedge $\{1,4,6\}$ and will not be able to send more flow as the hyperedge has reached its maximum capacity (Recall that we have fixed the maximum capacity of each hyperedge to be two). In the rest of the iteration, node $0, 1, 2$ and $3$ will exchange the flow between them until their levels reach the maximum level $h$ and this will terminate the iteration. Finally, we have four nodes with excess and the detected cluster $A$ will be $0$, $1$, $2$ and $3$ as these nodes have $\vm_M \geq \vd_M$. In case of original CRD, the returned cluster $A$ will be nodes $0$, $1$, $2$, $3$ and $7$, as some flow will leak to node $7$.

\subsection{Motivating Example} \label{sec:motivation_example_localized}

One of the key advantages of CRD and our HG-CRD generalization is that they can provably explore a vastly smaller region of the graph than random walk or spectral methods. (Even those based on the approximate PageRank.) Here, figure~\ref{fig:motivation_example_leakage} shows a generalized graph of the graph provided in the original CRD paper \citep{wang2017capacity}, where we are interested in clustering this graph based on the triangle motif. In this graph, there are $p$ paths, each having $l$ triangles and each path is connected at the end to the vertex $u$ with a triangle. In higher order spectral clustering techniques, the process will require $\Omega(k^2l^2)$ steps to spread enough mass to cluster $B$. During these steps, the probability to visit node $v$ is $\Omega(k l/p)$. If $l = \Omega(p)$, then the random walk will escape from $B$. However, let us consider the worst case in HG-CRD where we start from $u$. In this case, $1/p$ of the flow will leak from $u$ to $v$ in each iteration. As HG-CRD process doubles the flow of the nodes in each iteration, it only needs $\text{log } l$ iterations to spread enough flow to cluster $B$ and therefore the flow leaking to $\bar{B}$ will be $(\text{log } l)/p$ of the total flow in the graph. This means HG-CRD will stay more localized in cluster $B$ and will leak flow to $\bar{B}$ much less than higher order spectral clustering techniques by a factor of $\Omega(\frac{p}{\text{log } l})$. HG-CRD is also able to detect the cluster in fewer number of iterations than higher order spectral clustering.

\begin{marginfigure}[-5\baselineskip]
\centering
\includegraphics[width=\linewidth]{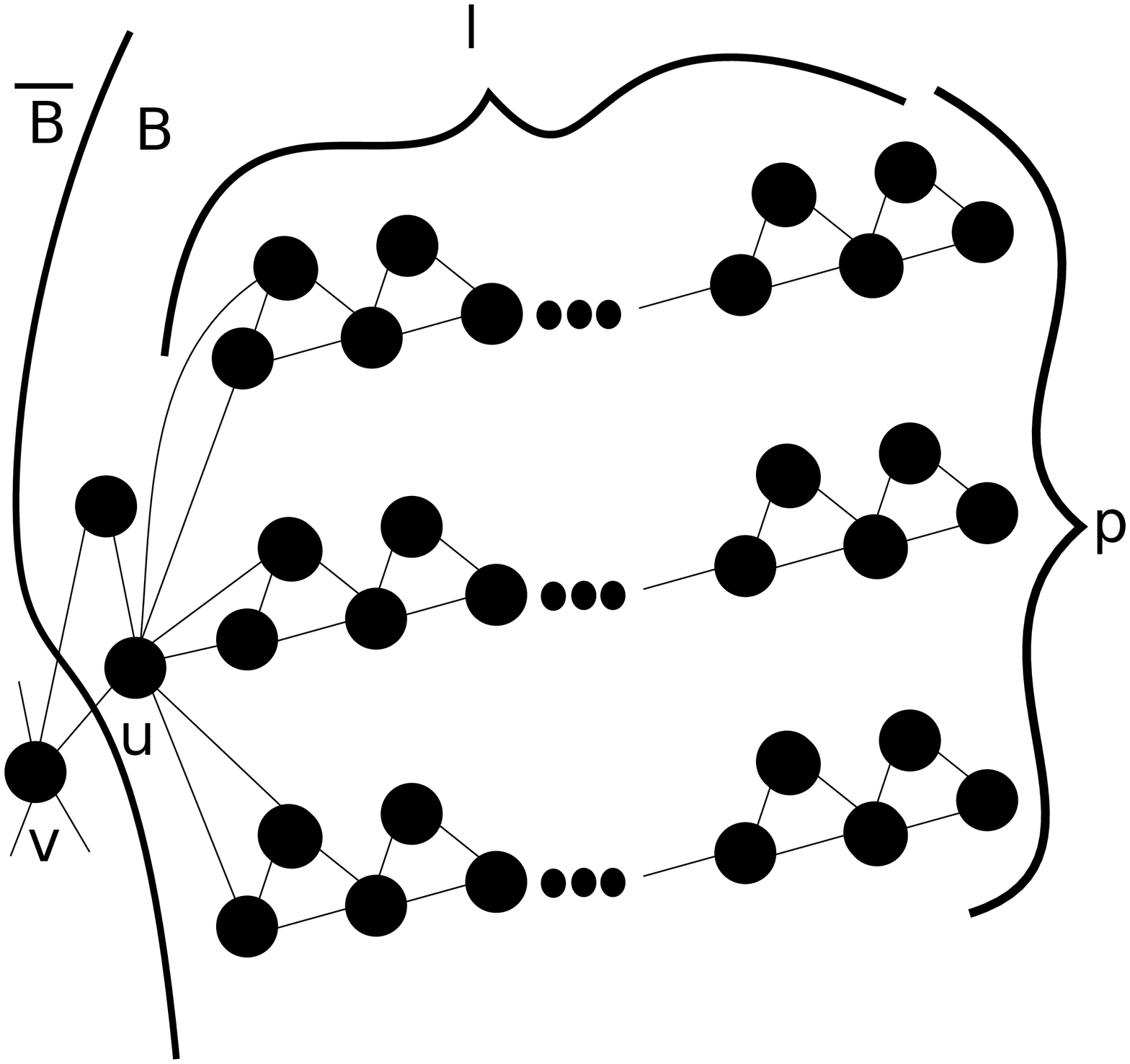}
\caption{Generalized example to show a comparison between hypergraph CRD (HG-CRD) and higher order spectral clustering. Starting the diffusion process from node $u$, HG-CRD will stay more localized in cluster $B$ and will leak flow to $\bar{B}$ much less than higher order spectral clustering techniques by a factor of $\Omega(\frac{p}{\text{log } l})$.}
\label{fig:motivation_example_leakage}
\end{marginfigure}

\subsection{HG-CRD Analysis} \label{sec:motif_conductance_analysis}
The analysis of CRD proceeds in a target recovery fashion. Suppose there exists a set $B$ with motif conductance $\phi$ and that is well-connected internally in a manner we will make precise shortly. This notion of internal connectivity is discussed at length in~\cite{wang2017capacity} and \cite{zhu2013local}. Suffice it to say, the internal connectivity constraints that are likely to be true for much of the social and information networks studied, whereas these internal connectivity constraints are not likely to be true for planar or grid-like data.  We show that HG-CRD, when seeded inside $B$ with appropriate parameters, will identify a set closely related to $B$ (in terms of precision and recall) with conductance at most $O(k \phi)$ where $k$ is the largest size of a hyperedge. Our theory heavily leverages the results from~\cite{wang2017capacity}. We restate theorems here for completeness and provide all the adjusted details of the proofs in the supplementary material. However, these should be understood as mild generalizations of the results -- hence, the statement of the theorems is extremely similar to~\cite{wang2017capacity}. Note that there are some issues with directly applying the proof techniques. For instance, some of the case analysis requires new details to handle scenarios that only arise with hyperedges. These scenarios are discussed in~\ref{hg-crd--theory-diff}.

\begin{theorem}[Generalization of \citet{wang2017capacity}, Theorem 1] \label{theorem:Theorem1_generalized}

Given G, $\vm_M$, $\phi \in (0, 1]$ such that: $\ve^T \vm_M \leq \text{volM}(G)$ and $\vm_M(v) \leq 2 \vd_M(v)$ $\text{ } \forall v \in V$ at the start, HG-CRD inner terminates with one of the following:
\begin{compactitem}
    \item \textbf{Case 1:} HG-CRD-inner finishes with $\vm_M(v) \leq  \vd_M(v) \text{ } \forall v \in V$,
    \item \textbf{Case 2:} There are nodes with excess and we can find cut A of motif-based conductance of O$(k \phi)$. Moreover, $2 \vd_M(v) \geq \vm_M(v)\geq \vd_M(v) \text{ } \forall v \in A \text{ and } \vm_M(v) \leq \vd_M(v) \text{ } \forall v \in \bar{A}$.
\end{compactitem}
\end{theorem}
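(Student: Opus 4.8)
The plan is to transcribe the push–relabel analysis that \citet{wang2017capacity} use for their inner routine, adapting each step to the hyperedge dynamics and tracking where the extra factor of $k$ is forced to appear. All of the structure — termination, the two-case dichotomy, the level sweep cut — carries over; the genuinely new work is in the cut-finding estimate.

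\textbf{Step 1 (termination and terminal state).} First I would show HG-CRD-inner performs only finitely many operations. Within a call the total flow $\ve^T\vm_M$ is conserved, since each push removes $(k-1)\Psi$ from $v$ and adds $\Psi$ to each of the $k-1$ other endpoints; node levels are nondecreasing and capped at $h=3\log(\ve^T\vm_M)/\phi$, so there are at most $nh$ relabels. Between relabels, each push either saturates the picked hyperedge ($r(e)\to 0$) or drives $\textbf{ex}(v)$ below $k-1$ and removes $v$ from $Q$, and a standard amortized-potential argument (Goldberg–Tarjan, with hyperedges in place of arcs) bounds the number of pushes. Hence the loop ends with $Q=\emptyset$, so every vertex has $\textbf{ex}(v)<k-1$ or $\vl(v)=h$. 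If moreover $\vm_M(v)\le\vd_M(v)$ everywhere we are in Case~1; otherwise some vertex carries genuine excess and, being absent from $Q$, must sit at the top level $h$, which is Case~2. Along the way I would record the two invariants needed for the ``Moreover'' clause: the cap $\Psi\le 2\vd_M(u_i)-\vm_M(u_i)$ keeps $\vm_M(v)\le 2\vd_M(v)$ for all $v$ at all times, and the sweep cut places $v$ in $A$ exactly when $\vm_M(v)\ge\vd_M(v)$, so automatically $\vd_M(v)\le\vm_M(v)\le 2\vd_M(v)$ on $A$ and $\vm_M(v)\le\vd_M(v)$ on $\bar A$.

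\textbf{Step 2 (the cut in Case~2).} This is the core. For $i=0,1,\dots,h$ put $Z_i=\{v:\vl(v)\ge i\}$; these are nested and are precisely the prefix sets output by the level-sorted sweep cut, and $Z_h\neq\emptyset$ with $\text{vol}_M(Z_h)\ge 1$. I would argue by contradiction: suppose $|\text{cut}_M(Z_i)|> c\,k\,\phi\cdot\min(\text{vol}_M(Z_i),\text{vol}_M(\bar Z_i))$ for every $i$ and an absolute constant $c$. The driving fact is that for the trapped excess to have climbed to level $h$, a comparable quantity of flow had to be pushed across each interface $Z_{i+1}\to Z_i$, whereas each hyperedge $e=(v,u_1,\dots,u_{k-1})$ carries at most $\vm_M(e)\le\min(\vl(v),C)\le C=1/\phi$ units no matter how many of its $k$ endpoints straddle the interface. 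Combining ``net flow across the interface $\le C\,|\text{cut}_M(Z_i)|$'' with the relation $\text{volM}(\cdot)=(k-1)\,\text{vol}_M(\cdot)$ — the degree-volume seen by the flow versus the endpoint-volume used in $\phi_M$ — yields a one-step contraction $\text{volM}(Z_{i+1})\le(1-\Omega(\phi))\,\text{volM}(Z_i)$. Iterating over all $h=3\log(\ve^T\vm_M)/\phi$ levels forces $\text{volM}(Z_h)\le(\ve^T\vm_M)^{-\Omega(1)}<k-1$, contradicting $Z_h\neq\emptyset$. Hence some $Z_i$ satisfies $\phi_M(Z_i)=|\text{cut}_M(Z_i)|/\text{vol}_M(Z_i)\le O(k\phi)$; this $Z_i$, which by Step~1 agrees with the excess-carrying vertices on all but the low levels that hold no excess, is the set $A$ returned by the sweep, completing Case~2.

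\textbf{Main obstacle.} The delicate point is the one-step contraction of Step~2 in the hypergraph setting. Unlike an arc, a hyperedge $e=(v,u_1,\dots,u_{k-1})$ can cross a level interface in many ways: its source $v$ may lie on either side, the $u_j$ may be spread over several levels, and eligibility requires $\vl(v)>\vl(u_j)$ for only $\alpha$ of the $u_j$, so flow on $e$ need not move strictly downhill across the cut. One must also cope with the capacity-releasing residual $r(e)=\min(\vl(v),C)-\vm_M(e)$, whose effective capacity depends on $\vl(v)$. Correctly bookkeeping the net flow across $\partial Z_i$ under all of these cases — the hyperedge-only scenarios flagged for the supplement in Appendix~\ref{hg-crd--theory-diff} — is where the real effort lies; the remainder is a faithful adaptation of \citet{wang2017capacity}, with the single new phenomenon being that the cut/volume mismatch $\text{volM}=(k-1)\text{vol}_M$ contributes the factor $k$ in the conductance bound.
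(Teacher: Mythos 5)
Your Step 1 (termination, the terminal dichotomy, and the flow bounds $\vd_M(v)\le\vm_M(v)\le 2\vd_M(v)$ on $A$) is fine and matches the paper's level-based case analysis. The gap is in Step 2: the single contradiction ``every $Z_i$ has a large cut $\Rightarrow$ one-step volume contraction $\text{volM}(Z_{i+1})\le(1-\Omega(\phi))\,\text{volM}(Z_i)$'' does not follow. The level-to-level volume increment $\text{volM}(Z_i)-\text{volM}(Z_{i+1})=\text{volM}(B_i)$ is controlled only by hyperedges with an endpoint at level exactly $i$, whereas a single hyperedge spanning levels $j\gg j'$ is counted in $|\text{cut}_M(Z_i)|$ for every $i\in(j',j]$. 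Your hypothesis ``$|\text{cut}_M(Z_i)|>ck\phi\cdot\text{vol}_M(Z_i)$ for all $i$'' could therefore be met entirely by long-range hyperedges while all the $\text{volM}(B_i)$ stay tiny, and no contraction (hence no contradiction) results. Moreover, your flow inequality ``net flow across the interface $\le C\,|\text{cut}_M(Z_i)|$'' is an upper bound on flow in terms of the cut, which points the wrong way: to penalize a large cut you need a \emph{lower} bound on the flow it must carry.

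The paper's proof supplies exactly the missing decomposition. It splits the hyperedges leaving $S_i$ into Group~1 (touching only levels $j$ and $j-1$ for $j\ge i$) and Group~2 (spanning a level gap of at least two), with counts $z_1$ and $z_2$. The pigeonhole/volume-growth contradiction is run \emph{only on $z_1$}, over levels $i\in[h/2,h]$, to locate an $i^*$ with $z_1(i^*,\cdot)\le (\phi/h)\,\text{volM}(S_{i^*})$ — this is legitimate because Group~1 hyperedges genuinely charge against $\text{volM}(B_{j})$. Group~2 hyperedges are handled by an entirely different mechanism: eligibility forces each of them to be saturated at capacity $C=1/\phi$ (and the restriction $i^*\ge h/2\ge 1/\phi$ guarantees $\min(\vl(v),C)=C$), so $z_2(i^*)\cdot C/(k-1)$ is a \emph{lower} bound on flow leaving $S_{i^*}$, which flow-conservation bounds above by $z_1/\phi+2\,\text{volM}(S_{i^*})$. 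Combining the two bounds gives $\phi_M(S_{i^*})\le(3k-2)\phi$, with a separate symmetric argument over levels $[1,h/2]$ when $\bar S_{i^*}$ is the smaller side. You correctly identify in your ``Main obstacle'' that bookkeeping hyperedges across the interface is the crux, but the skeleton you propose collapses the two mechanisms into one and cannot be closed without reintroducing the $z_1/z_2$ split.
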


Let us assume that there exists a good cluster $B$ that we are trying to recover. The goodness of cluster B will be captured by the following two assumptions, which are a generalized version of the two assumptions mentioned in CRD analysis:

\begin{assumption}
[Generalization of \citet{wang2017capacity}, Assumption 1]
\begin{align*}
\sigma_1 = \phi^{(s)}_M(B)/((k-1) \phi_M(B)) \geq \Omega(1),
\end{align*}
where $\phi^{(s)}_M(B)$ is the set motif conductance, which is defined as the minimum motif conductance of any set in the induced subgraph on $B$. This means that any subset in $B$ has worse motif conductance than the set $B$ by a gap of $k-1$, which makes $B$ a good cluster.
\end{assumption}

\begin{assumption}
[Generalization of \citet{wang2017capacity}, Assumption 2]

$\exists \text{  }\sigma_2 \geq \Omega(1) \text{, such that any } T \subset B \text{ having } \text{ volM} (T) \leq \text{volM} (B)/2 \text{, satisfies: }$
\begin{align*}
|M(T, B \setminus T)|/(|M(T, \bar{B})| \text{ log volM} (B) / \phi^{(S)}_M(B)) \geq \sigma_2,
\end{align*}
where $|M(A, B)|$ is the number of hyperedges with at least one endpoint in $A$ and at least another endpoint in $B$. This assumption states that any subset $T$ in $B$ is more connected via hyperedges to nodes in $B$ than nodes in $\bar{B}$ by a factor of $\sigma_2 \text{ log volM}(B)/\phi^{(S)}_M(B)$.
\end{assumption}

HG-CRD will work as follows, similar to CRD process, we will assume good cluster $B$ that satisfies assumption 1 and 2, and has the following properties: $\text{vol}_M (B) \leq \text{vol}_M(G)/2$, the diffusion will start from $v_s \in B$ and we know estimates of $\phi^{(S)}_M (B)$ and $\text{vol}_M (B)$ and we will set $\phi = \theta(\phi^{(S)}_M (B)/k)$.

\begin{theorem} [Generalization of \citet{wang2017capacity}, Theorem 3] \label{theorem:Theorem3_generalized}

If we run HG-CRD with $\phi \geq \Omega(\phi_M(B))/k$, then:
\begin{compactitem}
    \item  $\text{vol}_M (A \setminus B) \leq O(k/\sigma) . \text{vol}_M(B)$ ,
    \item  $\text{vol}_M (B \setminus A) . \leq O(k/\sigma) \text{vol}_M(B)$,
    \item $\phi_M(A) \leq O(k \phi)$,
\end{compactitem}
 where $A = \{v \in V| d_M(v) \leq m_M(v) \}$ and $\sigma = min(\sigma_1, \sigma_2)$.
\end{theorem}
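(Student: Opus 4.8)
The plan is to follow the target-recovery argument of \citet{wang2017capacity} for their Theorem~3, replacing the edge quantities by the motif quantities $\vd_M$, $\text{vol}_M$, $\text{volM}$, $|\text{cut}_M(\cdot)|$, $|M(\cdot,\cdot)|$ throughout and carrying the extra factors of $(k-1)$ that the hyperedge push rule introduces. The starting point is Theorem~\ref{theorem:Theorem1_generalized}: its preconditions ($\ve^T\vm_M \le \text{volM}(G)$ and $\vm_M(v)\le 2\vd_M(v)$) are maintained across iterations by the doubling at line~4 together with the trimming at line~8 and the termination test at line~9, so every call to HG-CRD-inner ends either in Case~1 (no node has excess) or in Case~2 (a cut of motif conductance $O(k\phi)$ exists and is found by the sweep cut). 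If Case~2 ever fires, the third conclusion $\phi_M(A)\le O(k\phi)$ is immediate, since the outer loop keeps the best sweep-cut set. So the substantive work reduces to (a) showing that Case~2 does occur within the $t = O(\log \text{vol}_M(B))$ allotted iterations, and (b) bounding $\text{vol}_M(A\setminus B)$ and $\text{vol}_M(B\setminus A)$.

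For the first volume bound I would prove a leakage estimate: at any point during the run, the total flow residing on nodes of $\bar B$ is small. Every hyperedge on the motif cut of $B$ carries at most $C=1/\phi$ units, and one unit of hyperedge flow deposits at most $(k-1)$ units on the $\bar B$ side, so the flow in $\bar B$ is at most $(k-1)C\,|\text{cut}_M(B)| = (k-1)\,\phi_M(B)\,\text{minvol}_M(B)/\phi$. With the parameter choice $\phi = \Theta(\phi^{(S)}_M(B)/k)$ and Assumption~1 ($\phi^{(S)}_M(B) \ge \Omega((k-1)\phi_M(B))$), the ratio $\phi_M(B)/\phi$ is $O(1/\sigma_1)$ up to the $k/(k-1)$ factor, so the leaked flow is $O((k-1)/\sigma)\,\text{vol}_M(B)$ (in $\text{volM}$ units). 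Since every node of $A$ holds $\vm_M(v)\ge \vd_M(v)$ flow, $\sum_{v\in A\setminus B}\vd_M(v)$ is at most the flow in $\bar B$, which after dividing by $(k-1)$ gives $\text{vol}_M(A\setminus B)\le O(k/\sigma)\,\text{vol}_M(B)$, the first conclusion.

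The recall bound and the "Case~2 occurs" claim share the same engine. While Case~1 holds, line~8 trims nothing and the flow doubles each round, so the total flow is $2\vd_M(s)2^j$. Using Assumption~1 I would argue, as in \citet{wang2017capacity}, that flow cannot stay confined to a subset $T\subset B$ of small motif volume, because the motif boundary of $T$ inside $B$ would then be a sparse motif cut, contradicting the set motif conductance of $B$; and Assumption~2 guarantees that the internal expansion of $B$ has enough slack ($\log\text{volM}(B)/\phi^{(S)}_M(B)$, which matches $h$ up to the factor $k$ coming from $\phi=\Theta(\phi^{(S)}_M(B)/k)$) that doubling genuinely pushes flow outward rather than sloshing between the same nodes. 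Consequently, after $O(\log\text{vol}_M(B))$ doublings the flow would have to occupy essentially all of $B$ and exceed $\text{volM}(B)$ unless a bottleneck is reached, i.e. Case~2 fires. At that iteration, all but an $O(k/\sigma)$ fraction of $\text{vol}_M(B)$ sits on nodes with $\vm_M(v)\ge \vd_M(v)$ — these are exactly the nodes in $A$ — after subtracting the leaked flow and the $\tau$-slack; the nodes of $B$ missing from $A$ can have absorbed at most $\vd_M(v)$ each, and there is not enough "missing" flow to exclude more than $O(k/\sigma)\,\text{vol}_M(B)$ worth of volume, giving $\text{vol}_M(B\setminus A)\le O(k/\sigma)\,\text{vol}_M(B)$.

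The hardest part is not this outer accounting but the push/relabel analysis inside HG-CRD-inner that underpins Theorem~\ref{theorem:Theorem1_generalized}, which must be redone with the hyperedge rule: a single push decreases $\vm_M(v)$ by $(k-1)\Psi$ but raises each of the other $k-1$ endpoints by only $\Psi$; the residual capacity is capped by $\min(\vl(v),C)$; and an eligible push needs $\vl(v)>\vl(u_i)$ for only $\alpha$ of the endpoints, so a "legal" push can still move flow uphill onto as many as $k-1-\alpha$ neighbors. Showing that the total work and the total flow pushed against the level gradient are still controlled, and that the Case~2 cut has motif conductance $O(k\phi)$ rather than $O(\phi)$, is exactly the new case analysis flagged in section~\ref{hg-crd--theory-diff} and deferred to the supplement; everything outside that core is a mechanical substitution of the motif quantities and the factor $(k-1)$ into the corresponding steps of \citet{wang2017capacity}.
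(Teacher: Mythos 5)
Your skeleton matches the paper's: the conductance bound comes straight from Theorem~\ref{theorem:Theorem1_generalized} plus the sweep cut, and the recall bound together with the claim that a bottleneck must be hit within $O(\log \text{volM}(B))$ doublings is argued, as in the paper, from Assumptions~1 and~2. The gap is in your precision bound. You assert that at any point in the run the flow residing on $\bar B$ is at most $(k-1)\,C\,|\text{cut}_M(B)|$. That quantity only bounds the flow that crosses the cut \emph{within a single call} to HG-CRD-inner: the hyperedge flows are re-initialized each round, so every round up to $C$ fresh units can cross each cut hyperedge, and, more importantly, flow already deposited on $\bar B$ survives the trimming at line~8 (up to $\vd_M(v)$ per node) and is then doubled at line~4. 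The resulting recursion $F_{j+1}\le 2F_j + (k-1)C\,|\text{cut}_M(B)|$ over $\log \text{volM}(B)$ rounds does not give $O(k/\sigma)\,\text{vol}_M(B)$; it can inflate the leaked mass by a factor on the order of $\text{volM}(B)/\vd_M(s)$.

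The paper closes exactly this hole with Lemma~\ref{lemma:lemma1_generalization} (the generalization of Lemma~1 of \citet{wang2017capacity}): the leakage $L_j$ in round $j$ is bounded \emph{relative to the current mass} $M_j$, namely $L_j\le O(1/\sigma_1)\,M_j$ when $M_j\ge \text{volM}(B)/2$ (this is where your cut-capacity computation lives) and $L_j\le O\bigl(k/(\sigma_2\log\text{volM}(B))\bigr)M_j$ when $M_j$ is small, the latter requiring Assumption~2 and the level decomposition $B_i=\{v\in B:\vl(v)=i\}$. Summing these relative leakages over the $O(\log\text{volM}(B))$ doubling rounds is what yields total leaked mass $O(k/\sigma)\,\text{volM}(B)$, from which $\text{vol}_M(A\setminus B)\le \tfrac{1}{k-1}\sum_{v\in A\setminus B}\vm_M(v)\le O(k/\sigma)\,\text{vol}_M(B)$ follows because every $v\in A$ carries $\vm_M(v)\ge\vd_M(v)$. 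You invoke Assumption~2 only on the recall side; it is indispensable on the precision side as well, precisely because in the early rounds $M_j\ll\text{volM}(B)$ and the absolute cut-capacity bound is far too weak relative to the mass actually present. With the per-round relative leakage lemma substituted for your one-shot bound, the rest of your outline tracks the paper's argument.
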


\subsection{Theoretical Challenges} \label{hg-crd--theory-diff}
Proving the previous theorems for HG-CRD is non-trivial since several cases arises in hyperedges that do not exist in the original CRD. In this subsection, we discuss some of these non-trivial cases (The complete proofs of the theorems are provided in the supplementary material), such as:
\begin{compactitem}
    \item Theorem~\ref{theorem:Theorem1_generalized}: To prove this theorem, CRD~\citep{wang2017capacity} groups nodes based on their level value (nodes in level $i$ will be in group $B_i$) and then consider nodes with level at least $i$ to be in one cluster ($S_i$). After that, they categorize edges between cluster $S_i$ and $\bar{S}_i$ into two groups based on the level values of their endpoints. Since hyperedges do not have two endpoints, we needed to extend this categorization and proved similar properties for the extension. Because of this extension, we got a gap of $k$ between the motif conductance and $\phi$ as we proved that the motif conductance is $O(k \phi)$ instead of the original proof where the conductance is $O(\phi)$.
    \item Theorem~\ref{theorem:Theorem3_generalized}: The proof of CRD~\citep{wang2017capacity} relies on the following equation: 
    \begin{align*}
        \sum_{i=1}^h |E(B_i, \bar{B})| \min(i, 1/\phi)=   \sum_{i=1}^{1/\phi} |E(S_i, \bar{B})| ,
    \end{align*}
 where $E(A, B)$ is the set of edges from cluster $A$ to cluster $B$. This equation only holds for edges and not true for hyperedges of $k > 2$. Therefore, we provide a totally different proof that works for hyperedges of any sizes.
\end{compactitem}

\subsection{Running Time and Space Discussion.}
The running time of local algorithms are usually stated in terms of the output.
Recall that CRD running time is $O((\text{vol}(A) \text{ log vol} (A))/\phi)$. As hypergraph CRD replaces the degree of vertices by the motif-based degree and the flow in each iteration depends on the motif-based degree, therefore the running time will depend on $\text{volM}(A)$ instead of $\text{vol}(A)$ and it will be $O((\text{volM}(A) \text{ log volM} (A)) /\phi)$. For space complexity, it is $O(\text{volM}(A))$, as each node $v$ we explore in our local clustering will store hyperedges containing it. 
\begin{table}
\caption{Characteristics of datasets used in the experiments, where |\emph{V}| is the number of nodes, |\emph{E}| is the number of edges, |\emph{C}| is the number of communities and sizes is the range of the number of nodes in each community. Note that the number of communities and their sizes are chosen as the same as MAPPR \citep{yin2017local} and CRD \citep{wang2017capacity}.}
\begin{tabularx}{\linewidth} {lXXX}  \toprule
Dataset & $|V|$ & $|E|$ & $|C|$ (sizes) \\ \midrule
DBLP & 317,080 &  1,049,866 & 100 (10–36)\\ \addlinespace
Amazon & 334,863 & 925,872 & 100 (10–178)\\ \addlinespace
YouTube & 1,134,890 & 2,987,624 & 100 (10–200)\\ \addlinespace
Simmons81 & 1,517 & 32,988 & 2 (270-290)\\ \addlinespace
Colgate88 & 3,481 & 155,043 & 1 (556) \\ \addlinespace
Email-EU & 1,000 & 25,600 & 28 (10–109) \\ \addlinespace
\bottomrule
\end{tabularx}
\label{table:datasets}
\end{table}

\section{Experimental Results} \label{sec:exp}

In this section, we will compare CRD using motif adjacency matrix (CRD-M) and hypergraph CRD (HG-CRD) to the original CRD in the community detection task using both synthetic datasets and real world graphs, then we will compare HG-CRD to other related work like motif-based approximate personalized
PageRank (MAPPR) and approximate personalized PageRank (APPR) in the community detection task using both undirected and directed graphs. Table~\ref{table:datasets} shows the characteristics of the datasets used in the experiments. You can download SNAP datasets from \url{https://snap.stanford.edu/data/} and Facebook dataset from \url{http://sociograph.blogspot.com/2011/03/facebook100-data-and-parser-for-it.html}. All experiments were done using Python 3.6.4 on Windows environment. The experiments run on a single computer with 8GB RAM, Intel core i3 processor and 1TB hard drive. 

\textbf{Reproduction:} You can find the source code to regenerate all the figures and the tables in the .zip folder in \url{https://www.dropbox.com/s/acbcg9adupd26no/Higher_order_capacity_releasing_diffusion-master%20%281%29.zip?dl=0}

\subsection{Synthetic Dataset}

\begin{marginfigure}
\includegraphics[width=\linewidth]{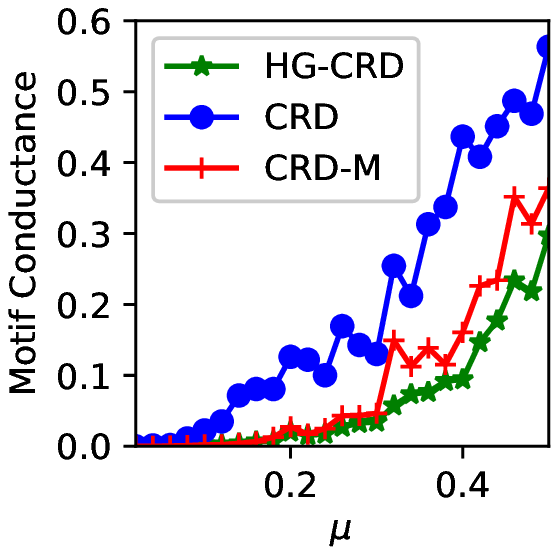}
\hfill%
 \includegraphics[width=\linewidth]{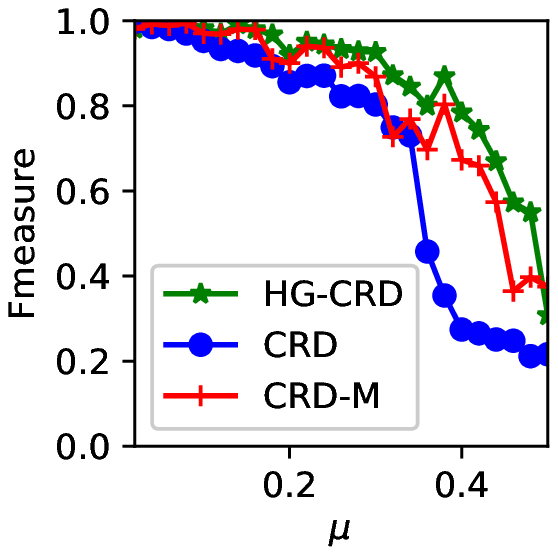}
\caption{Comparison between CRD, CRD using motif adjacency matrix (CRD-M) and hypergraph CRD (HG-CRD) on LFR synthetic datasets.}
\label{fig:lfr_results}
\end{marginfigure}

We use the LFR model~\citep{lancichinetti2008benchmark} to generate the synthetic datasets as it is a widely used model in evaluating community detection algorithms. The parameters used for the model are $n = 1000$, average degree is $10$, maximum degree is $50$, minimum community size is $20$ and maximum community size is $100$. LFR node degrees and community sizes are distributed according to the power law distribution, we set the exponent for the degree sequence to $2$ and the exponent for the community size distribution to $1$. We vary $\mu$ the mixing parameter from $0.02$ to $0.5$ with step $0.02$ and run CRD, CRD using motif adjacency matrix (CRD-M) and hypergraph CRD (HG-CRD) using a triangle as our desired higher order pattern. Each technique is run $100$ times from random seed nodes and report the median of the results. The implementation of CRD requires four parameters which are maximum capacity per edge $C$, maximum level of a node $h$, maximum iterations $t$ of CRD inner, how much excess of flow is too much $\tau$. We use the same parameters for all CRD variations following the setting in \cite{wang2017capacity}, $h = 3$, $C = 3$, $\tau = 2$ and $t = 20$. For HG-CRD, we set $\alpha$ to be $1$.  As shown in figure~\ref{fig:lfr_results}, HG-CRD has the lowest motif conductance and it has better $F_1$ than the original CRD and CRD-M. HG-CRD gets higher $F_1$ when communities are harder to recover, when $\mu$ gets large.

\begin{fullwidthtable}
\caption{Comparison between CRD, CRD-M and HG-CRD using SNAP and Facebook datasets.}
\begin{tabularx}{\linewidth} {lXXXXXXXXX}  \toprule
     &  & \multicolumn{3}{c}{Motif Conductance} &  & \multicolumn{3}{c}{$F_1$}\\  \cmidrule{2-5} \cmidrule{7-10}
     & CRD & CRDM & HGCRD $\alpha=1$ & HGCRD $\alpha=2$ && CRD & CRDM & HGCRD $\alpha=1$ & HGCRD $\alpha=2$ \\ \midrule

DBLP  & 0.489 & 0.471 &  \textbf{0.394} & 0.451 && 0.313 &  0.320 & 0.329 & \textbf{0.359} \\
Amazon  & 0.267 & 0.159 & \textbf{0.122} & 0.194 && \textbf{0.632} & 0.603 & 0.585 & 0.561 \\ 
YouTube & 0.892 &  0.807 & \textbf{0.772} & 0.785 && 0.162 &  0.165 & 0.171 & \textbf{0.175} \\ 
Simmons81 (2007) & 0.377 & 0.389 & \textbf{0.333} & 0.371 && 0.644 & 0.660 & 0.665 & \textbf{0.681}\\ 
Simmons81 (2009) & 0.032 & 0.020 &  \textbf{0.018} & 0.019 && \textbf{0.964} &  0.956 & 0.959 & 0.962 \\ 
\bottomrule
\end{tabularx}
\label{table:crd_variation_exp}
\end{fullwidthtable}

\subsection{Hypergraph-CRD Compared to CRD}

Local community detection is the task of finding the community when given a member of that community. In this task, we start the diffusion from the given node. Table~\ref{table:crd_variation_exp} shows the community detection results for CRD, CRD using motif adjacency matrix (CRD-M) and hypergraph CRD (HG-CRD). In these experiments, we identify 100 communities from the ground truth such that each community has a size in the range mentioned in table~\ref{table:datasets}. The community sizes are chosen similar to the ones reported in \cite{yin2017local} and \cite{wang2017capacity}. Then for all algorithms, we start from each node in the community and finally report the result from the node that yields the best F1 measure. The implementation of CRD requires four parameters which are maximum capacity per edge $C$, maximum level of a node $h$, the maximum number of times $t$ that CRD inner is called, how much excess of flow is too much $\tau$. We use the same parameters for all CRD variations following the setting in \cite{wang2017capacity}, which are: $C = 3$, $h = 3$, $t=20$ and $\tau=2$ except in Youtube, we set the maximum number of iterations $t$ to be $5$ as the returned community size was very large and therefore the precision was small (Increasing the maximum number of iterations increases the returned community size as it allows the algorithm to explore wider regions). Additionally, we choose the triangle as our specified motif and for HG-CRD, we try all possible values of $\alpha$, which are $1$ and $2$ and report the results of both versions. As shown in Table~\ref{table:crd_variation_exp}, hypergraph CRD (HG-CRD) has a lower motif conductance than CRD in all datasets and it has the best or close $F_1$ in all datasets except Amazon. Looking closely, we can see that hypergraph CRD has a higher precision than the original CRD in all datasets. This can be attributed to exploiting the use of motifs which kept the diffusion more localized in the community. Additionally, CRD using motif adjacency matrix (CRD-M) has lower motif conductance than CRD in four datasets and higher $F_1$ in three datasets. The higher $F_1$ of CRD-M can also be attributed to the higher precision it achieves over CRD.

\begin{fullwidthtable}[h]
\caption{Comparison between HG-CRD and MAPPR using undirected and directed graphs.}
\begin{tabularx}{\linewidth} {lXXXXXXXXXXX}  
\toprule
     & \multicolumn{3}{c}{Precision} & & \multicolumn{3}{c}{Recall} & &  \multicolumn{3}{c}{$F_1$}\\ \cmidrule{2-4} \cmidrule{6-8} \cmidrule{10-12}
     & APPR & MAPPR & HGCRD  & & APPR & MAPPR & HGCRD && APPR & MAPPR & HGCRD  \\ \midrule
DBLP & 0.342 & 0.366 & \textbf{0.404} && 0.310 & 0.329 & \textbf{0.362} && 0.264 & 0.269 & \textbf{0.329} \\ 
Amazon &  0.634 & 0.660 & \textbf{0.718} && \textbf{0.704} & 0.567 & 0.566 && \textbf{0.620} & 0.567 & 0.585 \\ 
YouTube & 0.233 & 0.390 & \textbf{0.485} && 0.147 & \textbf{0.188} & 0.162 && 0.140 & 0.165 & \textbf{0.172}  \\ 
\bottomrule
\end{tabularx}
\label{table:related_work_exp}
\end{fullwidthtable}

\subsection{Related Work Comparison}

\begin{table}
\caption{Comparison between higher order CRD (HG-CRD) and Motif-based Approximate Personalized PageRank (MAPPR) on directed Email-EU graph.}
\begin{tabularx}{\linewidth} {l@{\qquad}XXXX}  \toprule
      & Precision & Recall & $F_1$ \\ \midrule
 APPR & 0.502 & \textbf{0.754} & 0.398 \\  \addlinespace
 CRD &  0.801 & 0.394 & 0.496 \\   \addlinespace
 MAPPR using M1 & 0.580 & 0.685 & 0.496 \\  \addlinespace
 MAPPR using M2 & 0.605 &  0.577 & 0.443 \\  \addlinespace
 MAPPR using M3 &  0.660 & 0.594 & 0.483 \\  \addlinespace
 CRD-M using M1 & 0.793 & 0.492 & 0.607   \\  \addlinespace
 CRD-M using M2 & 0.715 & 0.467 &  0.521 \\  \addlinespace
 CRD-M using M3 & 0.793 & 0.492 & 0.607   \\  \addlinespace
 HG-CRD using M1 &  \textbf{0.808} &  0.504 & \textbf{0.621} \\  \addlinespace
 HG-CRD using M2 & 0.777 & 0.515 &  0.587 \\  \addlinespace
 HG-CRD using M3 &   \textbf{0.808} &  0.504 & \textbf{0.621} \\  \addlinespace
\bottomrule
\end{tabularx}
\label{table:directed_graphs_exp}
\end{table}

In this section, we will compare hypergraph CRD to motif-based approximate personalized PageRank (MAPPR) and approximate personalized PageRank (APPR) in the community detection task. We follow the same experimental setup as mention in the previous section. Table~\ref{table:related_work_exp} shows the precision, recall and $F_1$ for hypergraph CRD, MAPPR and APPR. As shown in table~\ref{table:related_work_exp}, HG-CRD obtains the best $F_1$ in all datasets and a higher precision than MAPPR in all datasets by up to around $10\%$ in the YouTube dataset. This enhancement can be attributed to the CRD dynamics where it keeps the diffusion localized while spectral techniques yield more leakage outside the community.

\begin{marginfigure}
\centering
\includegraphics[width=\linewidth]{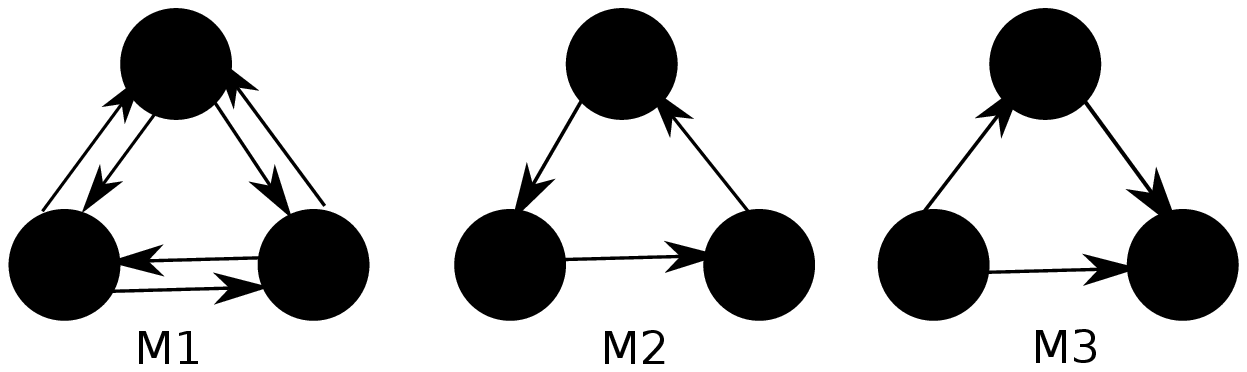}
\caption{Three directed motifs used for clustering Email-EU graph, which are a triangle in any
direction (M1), a cycle (M2), and a feed-forward loop (M3).}
\label{fig:motifs}
\end{marginfigure}

Furthermore, we compare HG-CRD to CRD, APPR and MAPPR using a directed graph, which is Email-EU. We set the parameters to be the same as the last section, and set $\alpha$ to be 1. For this task, we try three different directed motifs shown in figure~\ref{fig:motifs}, which are a triangle in any
direction (M1), a cycle (M2), and a feed-forward loop (M3). As shown in table~\ref{table:directed_graphs_exp} HG-CRD has the highest $F_1$ by around~$10\%$ compared to MAPPR, which again attributed to its high precision.

\subsection{Running Time Experiments}
We have made no extreme efforts to optimize running time. Nevertheless, we compare the running time of CRD, CRD-M and HG-CRD on LFR datasets while varying the mixing parameter ($\mu$). Figure~\ref{fig:running_time} shows the running times of detecting a community of a single node, we repeat the run 100 times starting from random nodes, then we report the mean running time and the error bars represent the standard deviations. As shown in the figure, when the communities are well separated ($\mu$ is less than $0.3$), CRD is the fastest technique. HG-CRD is slower than CRD by a small gap 0.1 seconds and is faster than CRD-M. When the communities are hard to recover ($\mu$ gets larger than $0.3$), CRD takes a longer time to recover the communities. However, both HG-CRD and CRD-M are able to recover the communities faster and with higher quality since they use higher order patterns.

\begin{marginfigure}
    \centering
    \includegraphics[width=\linewidth]{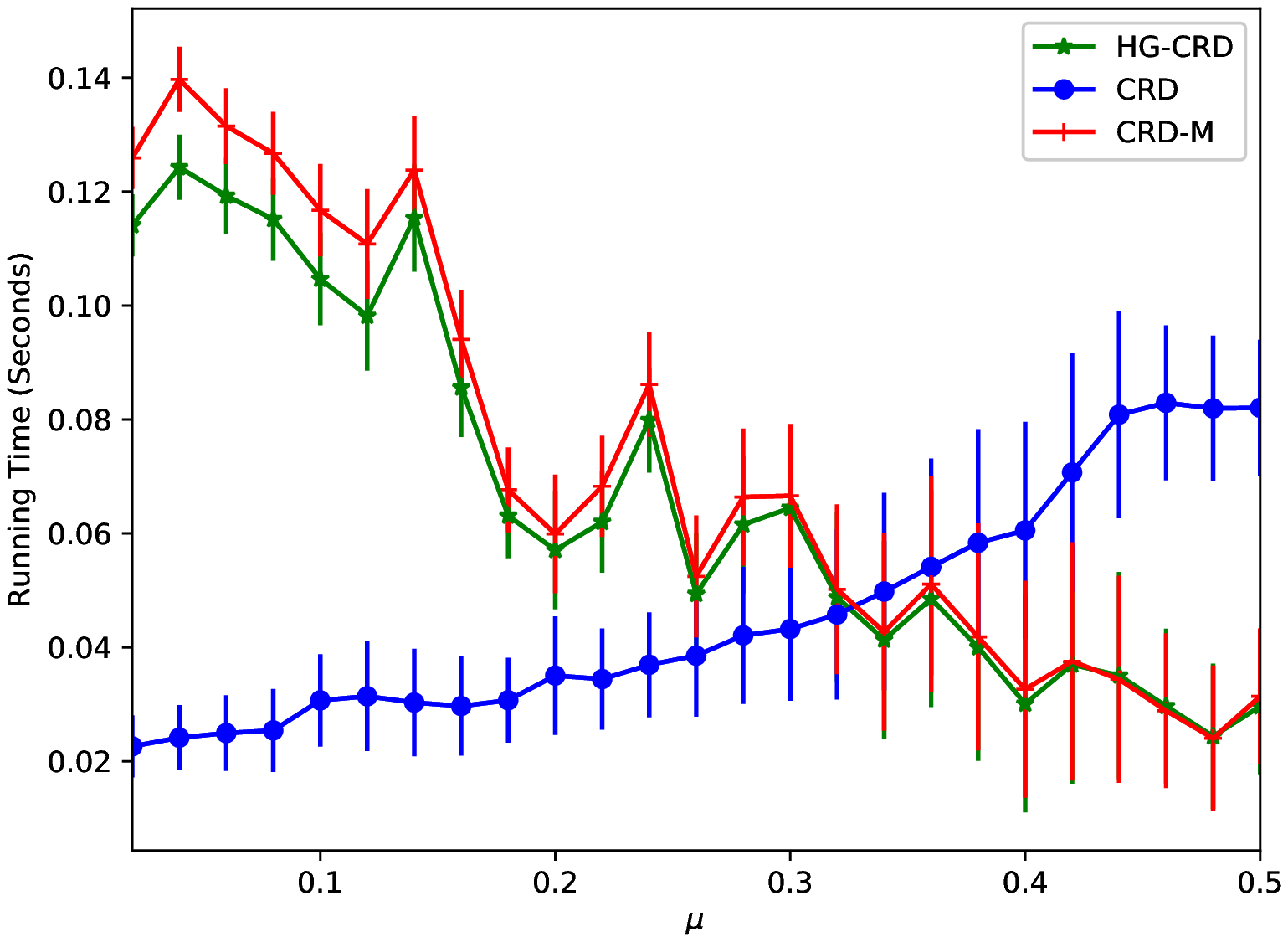}
    \caption{Running time comparision between CRD , CRD-M and HG-CRD on LFR datasets.}
    \label{fig:running_time}
\end{marginfigure}



\section{Discussion}

In this paper, we have proposed HG-CRD, a hypergraph-based implementation of the capacity releasing diffusion hybrid algorithm. Our future exploration includes using similar idea of pushing through hyperedges to extend other flow-based methods like max-flow quotient-cut improvement (MQI) \citep{lang2004flow}, Flow-Improve \citep{andersen2008algorithm} and Local Flow-Improve \citep{orecchia2014flow} to cluster based on motifs.

\section{Supplementary Material}


\begin{lemma}
Let $\text{volM}(S_i) = \sum_{u \in V} \vd_M(u)$ be the sum of the motif-based degrees then $\text{volM}(S_i) = (k-1) \text{vol}_M (S_i)$.
\end{lemma}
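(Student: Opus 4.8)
The plan is to unfold the definition of the motif-based degree and interchange the order of the resulting double sum. Reading the statement with the natural correction $\text{volM}(S_i) = \sum_{u \in S_i}\vd_M(u)$ (the identity cannot hold with $\sum_{u \in V}$), I would start from this expression and substitute the definition given earlier, namely $\vd_M(u) = (k-1)\sum_{e \in \cE}\mathbbm{1}[u \in e]$ — that is, $\vd_M(u)$ is $(k-1)$ times the number of hyperedges containing $u$. This yields
\[
\text{volM}(S_i) = \sum_{u \in S_i}\vd_M(u) = (k-1)\sum_{u \in S_i}\sum_{e \in \cE}\mathbbm{1}[u \in e].
\]

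Next I would swap the two finite sums to get $(k-1)\sum_{e \in \cE}\sum_{u \in S_i}\mathbbm{1}[u \in e] = (k-1)\sum_{e \in \cE}|e \cap S_i|$. The last step is to recognize $\sum_{e \in \cE}|e \cap S_i|$ as the total number of (motif instance, endpoint) incidences whose endpoint lies in $S_i$, which is exactly $\text{vol}_M(S_i)$ as defined in the paper (``the number of motif instance end points in $S_i$''). Chaining the equalities gives $\text{volM}(S_i) = (k-1)\,\text{vol}_M(S_i)$.

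There is no real obstacle here; the only point needing care is to keep the two notions of volume straight — $\text{vol}_M$ counts motif-endpoint incidences, whereas $\vd_M$ already absorbs the $(k-1)$ factor, so the factor must be applied exactly once. This lemma is simply making explicit the bookkeeping identity already recorded in the ``Description of used terms'' table, $\text{volM}(B) = \sum_{v\in B} d_M(v) = (k-1)\,\text{vol}_M(B)$, and the three displayed equalities above constitute the entire argument.
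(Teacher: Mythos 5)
Your proof is correct and follows essentially the same route as the paper's one-line argument: unfold $\vd_M(u) = (k-1)\sum_{e \in \cE}\mathbbm{1}_{u \in e}$, sum over $u \in S_i$, and identify the resulting incidence count with $\text{vol}_M(S_i)$. You also rightly note (and the paper's own proof silently corrects) that the summation in the statement should read $\sum_{u \in S_i}$ rather than $\sum_{u \in V}$.
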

\begin{fullwidth}
\begin{proof} This is just algebra: $\text{volM}(S_i) = \sum_{u \in S_i} \vd_M(u) = (k-1) \sum_{u \in S_i} \sum_{e \in \cE} \mathbbm{1}_{u \in e} = (k-1) \text{vol}_M (S_i).$
\end{proof}
\end{fullwidth}

\begin{theorem}[Generalization of \citet{wang2017capacity}, Theorem 1]

Given G, $\vm_M$, $\phi \in (0, 1]$ such that:
$\ve^T \vm_M \leq \text{volM}(G)$ and $\vm_M(v) \leq 2 \vd_M(v)$ $\text{ } \forall v \in V$ at the start, HG-CRD inner terminates with one of the following:
\begin{itemize}
	\item \textbf{Case 1:} HG-CRD-inner finishes the full HG-CRD step. A full HG-CRD step means that $\vm_M(v) \leq  \vd_M(v) \text{ } \forall v \in V$.
	\item \textbf{Case 2:} There are nodes with excess and we can find cut A of motif-based conductance of O$(k \phi)$. Moreover, $2 \vd_M(v) \geq \vm_M(v)\geq \vd_M(v) \text{ } \forall v \in A \text{ and } \vm_M(v) \leq \vd_M(v) \text{ } \forall v \in \bar{A}$.
\end{itemize}
\end{theorem}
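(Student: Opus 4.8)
The plan is to follow the push--relabel template that~\cite{wang2017capacity} use for CRD, redoing each accounting step with flow routed along hyperedges rather than edges. First I would establish \emph{termination} and record the invariants the analysis rests on. Node levels only increase and are capped at $h=3\log(\ve^T\vm_M)/\phi$, and between any two consecutive level increases only finitely many pushes occur: a push along a hyperedge $e=(v,u_1,\dots,u_{k-1})$ is either \emph{saturating} (it drives $r(e)$ to $0$, so $e$ is unusable until the level of its low endpoint rises) or \emph{non-saturating} (it is limited by $\lfloor\textbf{ex}(v)/(k-1)\rfloor$ or by some receiver capacity $2\vd_M(u_i)-\vm_M(u_i)$, and only boundedly many of these happen before $v$ leaves $Q$). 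Two invariants are maintained throughout: the cap $\Psi\le 2\vd_M(u_i)-\vm_M(u_i)$ keeps $\vm_M(v)\le 2\vd_M(v)$ for every $v$; and because a push of $\Psi$ removes $(k-1)\Psi$ from the source while $\Psi\le\lfloor\textbf{ex}(v)/(k-1)\rfloor$, a vertex that has ever held $\vm_M(v)\ge\vd_M(v)$ keeps $\vm_M(v)\ge\vd_M(v)$ afterward. Finally, each push conserves $\ve^T\vm_M$, so it stays $\le\text{volM}(G)$.

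Next I would split on the halting state. When HG-CRD-inner stops, $Q$ is empty, so every vertex is either inactive ($\textbf{ex}(v)<k-1$) or pinned at level $h$. If no vertex with excess sits at level $h$, a short accounting using the push cap gives $\vm_M(v)\le\vd_M(v)$ for all $v$ at the end of the step, which is Case~1. Otherwise there is leftover excess, and the remaining task is to exhibit the cut $A$.

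For Case~2 I would analyze the sweep cut over levels: set $S_i=\{v:\vl(v)\ge i\}$ for $i\ge1$; these are exactly the prefixes the sweep-cut routine inspects, and by the second invariant $S_1\subseteq\{v:\vm_M(v)\ge\vd_M(v)\}$, since a vertex leaves level $0$ only after becoming active, i.e.\ after holding $\vm_M(v)\ge\vd_M(v)+(k-1)$. The heart of the argument is that each hyperedge transmits at most $C=1/\phi$ units of flow over the whole run, so the excess accumulated above level $i$ obeys a bound of the shape $\sum_{v\in S_i}(\vm_M(v)-\vd_M(v))\le C\,|\text{cut}_M(S_i)|$, up to the hyperedge correction described below. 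Because the leftover excess is still positive at level $h=3\log(\ve^T\vm_M)/\phi$, a geometric argument over the levels forces some level $i^\ast$ whose cut $\partial S_{i^\ast}$ has motif conductance $O(k\phi)$; I then take $A=S_{i^\ast}$ (equivalently, up to the boundary vertices with $\vm_M(v)=\vd_M(v)$, the set $\{v:\vm_M(v)\ge\vd_M(v)\}$), so that the Sweepcut routine, which minimizes $\phi_M$ over prefixes, returns a set of motif conductance $O(k\phi)$. The flow bounds $\vd_M(v)\le\vm_M(v)\le 2\vd_M(v)$ on $A$ and $\vm_M(v)\le\vd_M(v)$ on $\bar A$ then follow from the two invariants together with choosing $i^\ast$ to separate the settled from the unsettled vertices.

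The hard part will be the hyperedge bookkeeping inside the Case~2 estimate. In the edge setting of~\cite{wang2017capacity}, a cut edge $(u,v)$ lies in exactly the boundaries $\partial S_j$ with $\vl(v)<j\le\vl(u)$ and carries at most $C$ units, which makes ``excess above level $i$ $\le C\,|\text{cut}_M(S_i)|$'' immediate after splitting the cut edges into two groups by endpoint level. A cut hyperedge instead presents a whole profile of endpoint levels, so one has to generalize that two-group split to $k$-node hyperedges and re-prove the analogues of the edge properties used in the decay argument --- this is precisely the case analysis flagged in Section~\ref{hg-crd--theory-diff}, and it is where the extra factor of $k$ enters, turning CRD's $O(\phi)$ into $O(k\phi)$. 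Everything else should be a faithful transcription of the CRD proof with $\vd$, $\text{vol}$, and $\text{cut}$ replaced by $\vd_M$, $\text{volM}$, and $\text{cut}_M$.
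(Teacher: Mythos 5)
Your overall route is the same as the paper's: a push--relabel termination argument, the invariants $\vm_M(v)\le 2\vd_M(v)$ and ``once $\vm_M(v)\ge\vd_M(v)$, always so,'' a split on whether any node at level $h$ retains excess, and a level-set sweep $S_i=\{v:\vl(v)\ge i\}$ whose cut hyperedges are divided into two groups (those whose endpoints span at most one level versus those spanning two or more); you also correctly locate the source of the extra factor $k$ in that two-group generalization. The gap is in the central estimate of Case~2. The inequality you propose, $\sum_{v\in S_i}(\vm_M(v)-\vd_M(v))\le C\,|\text{cut}_M(S_i)|$, bounds the excess \emph{above} by the cut; even granting it (it is not true as stated, since up to $\text{volM}(S_i)$ of excess can be present in $S_i$ at the start of the inner call without ever crossing the cut), it points in the wrong direction: to conclude $\phi_M(S_{i^*})=O(k\phi)$ you must bound the \emph{cut} above by $O(k\phi)\cdot\text{volM}(S_{i^*})$, and a large residual excess at level $h$ is perfectly consistent with every level cut being large. ``Leftover excess at level $h$ plus a geometric argument'' therefore does not produce the good level $i^*$.

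The actual argument has two separate mechanisms that your sketch conflates. First, a purely combinatorial pigeonhole over the levels $i\in[h/2,h]$: if the group-1 boundary satisfied $z_1(i,j)>(\phi/h)\,\text{volM}(S_i)$ at every such level, the volumes $\text{volM}(S_j)$ would grow by a factor $(1+\phi/h)$ per level, and after $h/2$ levels would exceed $(\ve^T\vm_M)^{3/2}$ by the choice $h=3\log(\ve^T\vm_M)/\phi$ --- a contradiction with $\text{volM}(S_{h/2})\le\ve^T\vm_M$. This yields $i^*$ with $\sum_j z_1(i^*,j)\le\phi\,\text{volM}(S_{i^*})$ and uses no flow information at all. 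Second, a flow-conservation argument bounds the group-2 boundary: every group-2 hyperedge at level $i^*\ge h/2\ge 1/\phi$ is saturated and carries $C=1/\phi$ units out of $S_{i^*}$, while the only inflow is through group-1 hyperedges (at most $C$ each) plus the at most $2\,\text{volM}(S_{i^*})$ units initially present, giving $z_2(i^*)\lesssim (k-1)\bigl(\sum_j z_1(i^*,j)+2\phi\,\text{volM}(S_{i^*})\bigr)$ and hence $\phi_M(S_{i^*})\le(3k-2)\phi$. You would also need the symmetric argument over levels $1$ to $h/2$ for the case that $S_{i^*}$ is not the smaller side of the cut (which is why the proof needs $B_0$ nonempty in Case~2), and the subcase of Case~1 where $B_0$ is empty and $\ve^T\vm_M\le\text{volM}(G)$ forces $\vm_M(v)=\vd_M(v)$ everywhere --- neither appears in your outline.
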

\begin{proof}
We extend the original CRD proof for higher order patterns. Let us classify the nodes into three cases based on their level values:
\begin{compactitem}
	\item \textbf{Case 0:} if $\vl(v) = h, \text{ then } 2 \vd_M(v) \geq \vm_M(v) \geq \vd_M(v) + (k-1)$. Node $v$ kept increasing its level because it had excess of flow until it reached the maximum level.
	\item \textbf{Case 1:} if $h > \vl(v) \geq 1, \text{ then } (k-1) + \vd_M(v) > \vm_M(v) \geq \vd_M(v)$. Node $v$ does not have excess at the end, otherwise, its level would have increased to h.
	\item \textbf{Case 2:} if $\vl(v) = 0, \text{ then } \vm_M(v) < \vd_M(v) + (k-1)$. Node $v$ never had excess of flow to push.
\end{compactitem}

\textbf{Proof of case 1:}
Let $B_i = \{v | l(v) = i\}$.
\begin{compactitem}
	\item If $B_h$ is empty, then the nodes were able to diffuse all of their excess. Then a full HCRD step is done and $\vm_M(v) \leq \vd_M(v) \text{ } \forall \text{ } v \in V$.
	\item If $B_0$ is empty, then level falls in case 0 or case 1 in the previous level categories and as $\ve^T \vm_M \leq \text{volM}(G)$ , then it must be that $\vm_M(v) = \vd_M(v)$ because $\ve^T \vm_M= \sum_{u \in V} \vm_M(u) = \sum_{u \in V} \vd_M(u) = \text{volM}(G)$ and hence full HCRD inner is done and $\vm_M(v)  \leq \vd_M(v) \text{ } \forall \text{ } v \in V$.
\end{compactitem}

\textbf{Proof of case 2:}
In this case, $B_0$ and $B_h$ are not empty and let $S_i$ be the set of nodes with level at least $i$. The claim will be that one of the $S$ cuts must have conductance O$(k \phi)$. Let us start by dividing the hyperedges between $S_i$ and $\bar{S_i}$ into two groups:
		\begin{compactitem}
			\item \textbf{Group 1}: Hyperedges with at least one endpoint in $B_j$ and at least another endpoint in $B_{j}$ or $B_{j-1}$, where $j \geq i$.
			\item \textbf{Group 2}: Hyperedges across more than one level (The difference in level values between the node with the highest level to all other nodes in the hyperedge is at least two).
		\end{compactitem}
Additionally, let $z_1 (i, j) = (k-1) \times |\text{hyperedges in group 1}|$, $z_2 (i) = (k-1) \times  |\text{hyperedges in group 2}|$, $\phi_1(i, j) = \frac{z_1(i, j)}{\text{volM}(S_i)}$  and $\phi_2(i) = \frac{z_2(i)}{\text{volM}(S_i)}$. First, we will show that there exists $i^*$ between $h$ and $\frac{h}{2}$ such that: $\phi_1(i^*, j) \leq \frac{\phi}{h}$. This will be a proof by contradiction:
Let $\phi_1(i,j) > \frac{\phi}{h} \text{ } \forall  i = h, ..., \frac{h}{2}$ and $j \geq i$, then:
\begin{align*}
     \text{volM}(B_{j}) &\geq z_1(i,j)\\
     \text{volM}(S_{j-1})  &\geq \text{volM}(S_j) + \phi_1(i, j) \text{volM}(S_i)\\
     \text{volM}(S_{j-1}) &> (1+\frac{\phi}{h}) \text{volM}(S_j) \hspace{1cm} \text{ }\text{volM}(S_i) \geq \text{volM}(S_j) .
\end{align*}

As $h = \frac{3 \text{ log } (\ve^T \vm_M)}{\phi} \leq \frac{3 \text{ log } (\text{volM}(G))}{\phi}$ , we get:
\begin{align*}
    \text{volM}(S_{h/2}) &> (1+ \frac{\phi}{h})
    ^{h/2} \text{volM}(S_{h}) > \Omega((\ve^T \vm_M)^{3/2}) .
\end{align*}
However, we know that $ \text{volM}(S_{h/2}) \leq \ve^T \vm_M$, which is a contradiction. Therefore, there exists $i^*$ between $h$ and $\frac{h}{2}$ such that: 
		\begin{equation}
		    \phi_1(i^*, j) \leq \frac{\phi}{h}.
		    \label{eq:3}
		\end{equation}
The idea in the remaining proof is that $z_2$ hyperedges are definitely pushing flow outside of $S$, while $z_1$ hyperedges can be pushing flow inside and outside of $S$. 

Consider any hyperedge counted in $z_2(i)$, these hyperedges have  level difference between the node of the highest level and all other nodes of at least two, which means the residual capacity of the hyperedge in $z_2(i)$ is zero (Because the difference in level is at least two, this means the node of highest level in the hyperedge did not consider pushing flow to the hyperedge, this can be either because (1) It did not have excess of flow, (2) a node in the hyperedge has reached its maximum capacity or (3) the hyperedge has reached its maximum capacity. Option (1) is not correct as the difference in level is at least two, which means the node with the highest label had excess and raised its level to push flow to another hyperedge. Additionally, option (2) is not correct as in this case, the node with maximum capacity has excess of flow and will end with level $h$ making it the node with the highest level or it will raise its level and push flow to the hyperedge first and get space since it has the lowest level. Therefore, option (3) is the correct one and the hyperedge residual capacity is zero.) Since $i^* \geq \frac{h}{2} \geq \frac{1}{\phi}$, then $\text{min}(\vl(v), C)$ where $v$ is the node pushing the flow across the hyperedge and as all nodes level is at least $\frac{h}{2}$, which is greater than $\frac{1}{\phi}$ and $C = \frac{1}{\phi}$, therefore, the flow of hyperedge $f$ must be $\frac{1}{\phi}$. Hence pushing flow outside of $S_i^*$ of $\frac{1}{\phi}$ per node. However, unlike the edge case, we cannot assume that all nodes of $z_2(i)$ is pushing flow from $S_i^*$ to $\bar{S_i^*}$ as some of the nodes is actually inside $S_i^*$ or inside $\bar{S_i^*}$. The $z_1(i)$ can push a maximum of $\frac{1}{\phi}$ in $S$ (As an upper bound of the flow leaking outside $S_i^*$, we will assume that all edges of $z_1(i)$ is pushing flow into $S_i^*$) and $2 \text{volM}(S_i^*)$ mass can start at $S_i^*$. Therefore, we have:
\begin{align*}
\text{ Flow out of } S_i^* = \frac{\alpha z_2(i^*)} {(k-1) \phi}
\end{align*}
\begin{equation}
\text{ Flow out of } S_i^* \leq \frac{\sum_{j=i^*}^h z_1(i^*, j)}{\phi} + 2 \text{volM}(S_i^*) ,
\label{eq:4}
\end{equation}
where $\alpha$ is the average number of nodes of $z_2(i^*)$ group that is on the other side $\bar{S_i^*}$. By assuming that $S_i^*$ is the smaller side of the cut, we get:
\begin{align*}
\phi_M(S_i^*) = \frac{|cut_M(S_i^*, \bar{S}_i^*)|}{\text{min}(\text{vol}_M(S_i^*), \text{vol}_M(\bar{S_i^*}))}.
\end{align*}
Multiply both the numerator and denominator by $k-1$, we get:
\begin{align*}
\phi_M(S_i^*) &= \frac{\sum_{j=i^*}^h z_1(i^*, j) + z_2(i^*)}{\text{volM}(S_i^*)}\\ \addlinespace
&\leq \frac{k-1}{\alpha} \frac{\sum_{j=i^*}^h z_1(i^*, j) + 2 \phi \text{volM}(S_i^*)} {\text{volM}(S_i^*)} \\ 
&+ \frac{\sum_{j=i^*}^h z_1(i^*, j)}{\text{volM}(S_i^*)} \\
&\leq \frac{k-1}{\alpha} (\phi + 2 \phi) + \phi \hspace{0.6cm} \text{ using inequality (\ref{eq:4})} \\ \addlinespace
&\leq \frac{3(k-1)}{\alpha} \phi + \phi \hspace{1.2cm} \text{  using inequality (\ref{eq:3})} \\ \addlinespace
 &\leq (3 k - 2) \phi \hspace{2cm} \text{ } \alpha \in [1, k-1] .
\end{align*}
Hence, $\phi_M(S_i^*) $ is O($k \phi$). When $k = 2$, the constant with $\phi$ is $4$, which is exactly the constant in the original CRD proof. If $S_i^*$ is not the smaller side of the cut, then similar to the CRD argument, we should run the contradiction argument from $1$ to $\frac{h}{2}$ and note that at most $z_2/(k-1)$ by $C$ are pushed into $\bar{S}_i$. This flow will either stay in $\bar{S}_i$ or go back to $S_i$ through $z_1$ hyperedges. Therefore:
\begin{align*}
    \frac{z_2(i)}{(k-1) \phi}  \leq \text{Flow in } \bar{S}_i^*  &\leq \sum_{u \in \bar{S}_i} m_M(u) + \frac{\sum_{j=i^*}^h z_1(i, j)}{\phi}\\
    &\leq \text{volM} (\bar{S}_i) + \frac{\sum_{j=i^*}^h z_1(i, j)}{\phi} .
\end{align*}
Recall that $\bar{S}_i^*$ is the smaller side of the cut, we get:
\begin{align*}
\phi_M(S_i^*) &= \frac{\sum_{j=i^*}^h z_1(i^*, j)+ z_2(i^*)}{\text{volM}(\bar{S}_i^*)}\\ 
              &\leq \frac{\sum_{j=i^*}^h z_1(i^*, j)}{\text{volM}(\bar{S}_i^*)}
              \\ &+ \frac{(k-1) (\sum_{j=i^*}^h z_1(i^*, j) + \phi \text{volM}(S_i^*))}{\text{volM}(\bar{S}_i^*)}\\
              &\leq (2k - 1) \phi .
\end{align*}
Therefore, $\phi_M(S_i^*) = O(k \phi)$, which completes the proof.
\end{proof}

\begin{lemma} [Generalization of \citet{wang2017capacity}, Lemma 1] \label{lemma:lemma1_generalization}

Let $M_j$ be the total mass in B in the jth step of HG-CRD-inner and $L_j$ be the total mass escaping B to $\bar{B}$, then we have:
If $M_j \geq \frac{\text{volM}(B)}{2}$, then $L_j \leq O(\frac{1}{\sigma_1}) M_j$ and if $M_j \leq \frac{\text{volM}(B)}{2}$, then $L_j \leq O(\frac{k}{\sigma_2 \text{log vol}_M(B)}) M_j$.
\end{lemma}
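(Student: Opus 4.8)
\noindent The plan is to adapt the leakage argument from~\citet{wang2017capacity}, Lemma 1, keeping track of where the factor $k-1$ and the hyperedge structure change the bookkeeping. First I would set up the two measures of flow that matter for the cluster $B$: the mass $M_j$ inside $B$ at step $j$ of HG-CRD-inner, and the escaped mass $L_j = \sum_{v \in \bar B} \vm_M(v)$ that has leaked out of $B$. The key observation, just as in CRD, is that every unit of escaped mass had to cross a cut hyperedge $e$ with one endpoint in $B$ and at least one in $\bar B$, and each such crossing hyperedge has capacity $C = 1/\phi$; so the total flow that can ever have crossed is at most $|M(B,\bar B)|\cdot C = |\text{cut}_M(B)|/\phi$. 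Since we set $\phi = \Theta(\phi^{(S)}_M(B)/k)$, this gives an a-priori cap $L_j \le O(k\,|\text{cut}_M(B)|/\phi^{(S)}_M(B))$, which is the crude bound we then need to relate to $M_j$ in each of the two regimes.

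\noindent In the regime $M_j \ge \text{volM}(B)/2$: here I would argue that $M_j$ is a constant fraction of $\text{volM}(B) = (k-1)\text{vol}_M(B)$, and then use Assumption~1. By definition $\phi^{(s)}_M(B)/((k-1)\phi_M(B)) = \sigma_1 \ge \Omega(1)$, so $|\text{cut}_M(B)| = \phi_M(B)\cdot\text{minvol}_M(B) \le \phi_M(B)\cdot\text{vol}_M(B) \le \frac{\phi^{(s)}_M(B)}{(k-1)\sigma_1}\cdot\text{vol}_M(B) = \frac{\phi^{(s)}_M(B)}{\sigma_1}\cdot\frac{\text{volM}(B)}{(k-1)^2}$. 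Feeding this into the a-priori cap $L_j \le O(k/\phi^{(S)}_M(B))\cdot|\text{cut}_M(B)|$ and then dividing by $M_j \asymp \text{volM}(B)$ should leave $L_j \le O(1/\sigma_1) M_j$ after the $k$ factors cancel against the $(k-1)$ normalization — this is where I would be most careful to confirm that all the $k$'s really do cancel and no stray factor of $k$ survives, since the lemma statement has no $k$ in the first case.

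\noindent In the regime $M_j \le \text{volM}(B)/2$: now the mass inside $B$ occupies at most half the motif-volume, so I would apply Assumption~2 with $T$ taken to be the ``wet'' set of nodes in $B$ carrying flow (or a suitable sweep set thereof), whose motif-volume is $\le \text{vol}_M(B)/2$. Assumption~2 says $|M(T,B\setminus T)| \ge \sigma_2 \,|M(T,\bar B)|\,\log\text{volM}(B)/\phi^{(S)}_M(B)$, i.e.\ $T$ is far better connected inward than outward. The argument, mirroring CRD, is that the flow escaping through the $|M(T,\bar B)|$ boundary hyperedges is small compared to the flow that the $|M(T,B\setminus T)|$ internal hyperedges can absorb, and invoking the internal expansion (Assumption~1 / the $\log$ factor from the level cap $h = 3\log(\ve^T\vm_M)/\phi$) lets one push $M_j$ back up toward $\text{volM}(B)/2$ geometrically; the net leakage over the $O(\log\text{vol}_M(B))$-many doubling steps telescopes to $L_j \le O\!\bigl(\frac{k}{\sigma_2\log\text{vol}_M(B)}\bigr) M_j$. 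Here the factor $k$ appears precisely because each crossing hyperedge, counted once in $|M(T,\bar B)|$, contributes up to $(k-1)$ units of motif-volume on the far side, whereas the internal normalization only removes one power of $(k-1)$ — so one power of $k$ genuinely survives, unlike in the first case.

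\noindent The main obstacle I anticipate is the second regime: in the original CRD proof the identity $\sum_{i=1}^h |E(B_i,\bar B)|\min(i,1/\phi) = \sum_{i=1}^{1/\phi}|E(S_i,\bar B)|$ (flagged in Section~\ref{hg-crd--theory-diff}) is what lets the escaped mass be charged level-by-level, and that identity simply fails for $k > 2$ because a hyperedge spans more than two levels. So I would need to replace that exact charging with an inequality-based accounting — bounding the contribution of each cut hyperedge by $(k-1)C$ regardless of how its endpoints are distributed across levels, then summing. Getting the constants and the single surviving factor of $k$ right through this looser accounting, while still recovering the $1/(\sigma_2\log\text{vol}_M(B))$ decay, is the delicate part; everything else is a routine transcription of the CRD estimates with $\vd \mapsto \vd_M$, $\text{vol} \mapsto \text{volM}$, and edges $\mapsto$ hyperedges.
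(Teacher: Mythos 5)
Your overall skeleton matches the paper's: split on whether $M_j$ is above or below $\text{volM}(B)/2$, use Assumption~1 in the first regime and Assumption~2 in the second. Your first case is essentially the paper's argument. One bookkeeping slip: the mass delivered across a cut hyperedge is not bounded by its flow $C$ but by $(k-1)C$, since a push of $\Psi$ along $e$ deposits $\Psi$ on each of the $k-1$ receiving endpoints; the paper's starting inequality is $L_j \leq (k-1)\,|\text{cut}_M(B,\bar B)|\,C$. Your cancellation still closes (the extra $k-1$ is absorbed by the $(k-1)^2$ in your bound on $|\text{cut}_M(B)|$), so the conclusion $L_j \leq O(1/\sigma_1)M_j$ survives, but the a-priori cap as you state it is not a valid upper bound on escaped mass.

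The second regime is where you have a genuine gap. First, $L_j$ is the mass escaping in the \emph{single} step $j$; your claim that the $1/\log$ factor arises by ``telescoping over the $O(\log \text{vol}_M(B))$ doubling steps'' confuses this per-step lemma with its cumulative consumption in Theorem~3. In the paper the $\log$ comes directly out of the denominator of Assumption~2 and nothing telescopes inside the lemma. Second, Assumption~2 is not applied to one ``wet'' set $T$: the paper decomposes $B$ into level sets $B_i=\{v\in B: \vl(v)=i\}$, bounds $L_j \leq (k-1)\sum_{i=1}^h |M(B_i,\bar B)|\min(i,1/\phi)$, and applies Assumption~2 to each $B_i$ separately (legitimate because nodes with positive level satisfy $\vm_M(v)\geq \vd_M(v)$, so $\text{volM}(S_i)\leq M_j\leq \text{volM}(B)/2$); the per-level structure is what lets the internal cuts be bounded by $\sum_i \text{volM}(B_i)\leq \sum_i \vm_M(B_i)\leq M_j$, producing an $M_j$ on the right-hand side. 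Your proposed repair of the failing CRD identity --- charging every cut hyperedge a flat $(k-1)C$ ``regardless of how its endpoints are distributed across levels'' --- throws away exactly this level information and would leave you with the case-1-style bound $(k-1)|M(B,\bar B)|C$, which cannot be related to a small $M_j$. In fact the problematic identity is not needed in this lemma at all (the paper flags it as an obstacle for Theorem~3, whose proof is restructured); here the inequality $L_j \leq (k-1)\sum_i |M(B_i,\bar B)|\min(i,1/\phi)$ plus termwise Assumption~2 suffices.
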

\begin{proof}
For case 1 of the proof, when  $M_j \geq \frac{\text{vol}_M(B)}{2}$, we get:
\begin{align*}
    L_j &\leq (k-1) |\text{cut}_M (B, \bar{B})| C\\
        &\leq (k-1) \text{vol}_M(B) \phi_M(B) \frac{1}{\phi} \hspace{0.6cm} \text{ } \text{vol}_M (B) \leq \frac{\text{vol}_M(G)}{2}\\
        &\leq 2 M_j \phi_M(B) \frac{k}{\phi^{(S)}_M(B)} \hspace{1.25cm} \text{ } M_j \geq \frac{\text{volM}(B)}{2}\\
        &\leq 2 M_j  \frac{1}{\sigma_1} = O(\frac{1}{\sigma_1}) M_j   \hspace{1.1cm} \text{  Assumption 1. }
\end{align*}
For case 2, we have $M_j \leq \frac{\text{volM}(B)}{2}$. Let us define $B_i = \{v \in B | l(v) = i\}$ and $S_i = \{ v \in B | l(v) \geq i\}$. As $M_j \leq \frac{\text{volM}(B)}{2}$, we have $\text{volM}(S_h) \leq ... \text{volM}(S_1) \leq  M_j \leq \frac{\text{volM}(B)}{2}$. As nodes in $S_i$ for $i = 1$ to $h$ are either in case 0 or 1 of the levels and therefore they have $m_M(v) \geq d_M(v)$. Therefore, we can use assumption 2 and get:
\begin{align*}
    L_j &\leq (k-1) \sum_{i=1}^h |M(B_i, \bar{B})| \min(i, \frac{1}{\phi}) \\
     &\leq (k-1) \sum_{i=1}^h \frac{|M(B_i, B \setminus B_i)| \frac{1}{\phi}}{\sigma_2 \text{ log volM} (B) \frac{1}{\phi^{(S)}_M(B)}} \hspace{0.1cm} \text { Assumption 2}\\
     &\leq \frac{(k-1)}{\phi} \sum_{i=1}^h \frac{\phi^{(B_i)}_M(B) \text{vol}_M(B_i)}{\sigma_2 \text{ log volM} (B) \frac{1}{\phi^{(S)}_M(B)}} \\ &\hspace{0.3cm} . \text{ as } \text{vol}_M(B_i) \leq \frac{\text{vol}_M(G)}{2} \\
      &\leq \frac{1}{\phi} \sum_{i=1}^h \frac{\phi^{(B_i)}_M(B) \text{volM}(B_i)}{\sigma_2 \text{ log volM} (B) \frac{1}{\phi^{(S)}_M(B)}} , \\
      & \hspace{0.3cm} \text{ as }  (k-1) \sum_{i=1}^h \text{vol}_M(B_i) = \sum_{i=1}^h \text{volM}(B_i)\\
      &\leq \sum_{i=1}^h \text{m}_M(B_i)  \leq  M_j\\
      &\leq M_j \frac{1}{\sigma_2 \text{ log volM} (B) \frac{1}{\phi^{(S)}_M(B)} \phi}\\
      &\leq O(\frac{k}{\sigma_2 \text{ log volM} (B)}) M_j , 
\end{align*}
which completes the proof of the lemma.
\end{proof}

\begin{theorem}
[Generalization of \citet{wang2017capacity}, Theorem 3]

If we run HG-CRD with $\phi \geq \frac{\Omega(\phi_M(B))}{k}$, then we get:
\begin{compactitem}
    \item $\text{vol}_M (A \setminus B) \leq O(\frac{k}{\sigma}) . \text{vol}_M(B)$ ,
    \item $\text{vol}_M (B \setminus A) . \leq O(\frac{k}{\sigma}) \text{vol}_M(B)$ ,
    \item $\phi_M(A) \leq O(k \phi)$ ,
\end{compactitem}
where $A = \{v \in V| d_M(v) \leq m_M(v) \}$ and $\sigma = min(\sigma_1, \sigma_2)$.
\end{theorem}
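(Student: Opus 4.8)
The plan is to follow the target-recovery argument behind \citet{wang2017capacity}, Theorem~3, using the generalized Theorem~\ref{theorem:Theorem1_generalized} and Lemma~\ref{lemma:lemma1_generalization} as the two workhorses and reworking only the parts sensitive to the hyperedge structure. Seed the diffusion at $v_s\in B$, so that while no bottleneck has been hit the total mass at the start of iteration $j$ (after the doubling step) is $2\vd_M(v_s)2^j$. The first task is to control the escaped mass: writing $M_j$ for the mass inside $B$ and $L$ for the cumulative mass that has leaked to $\bar B$, Lemma~\ref{lemma:lemma1_generalization} gives a one-step leak of $O\!\big(k/(\sigma_2\log\text{volM}(B))\big)M_j$ in the small-mass regime and $O(1/\sigma_1)M_j$ in the large-mass regime. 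Since leaked mass is only doubled in later iterations, telescoping over the $T=O(\log\text{volM}(B))$ iterations needed for $2\vd_M(v_s)2^j$ to reach $\Theta(\text{volM}(B))$ yields $L=O(k/\sigma)\,\text{volM}(B)$ with $\sigma=\min(\sigma_1,\sigma_2)$. This is exactly where the $\log\text{volM}(B)$ appearing in Assumption~2 and in $h=3\log(\ve^T\vm_M)/\phi$ is consumed --- it cancels the iteration count --- and we also take the iteration budget $t=\Omega(\log\text{volM}(B))$.

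Next I would force Case~2 of Theorem~\ref{theorem:Theorem1_generalized}. Once $2\vd_M(v_s)2^j$ exceeds $\text{volM}(B)$, more mass sits inside $B$ than Case~1 permits there (Case~1 caps the $B$-mass at $\text{volM}(B)$) while only $O(k/\sigma)\,\text{volM}(B)$ has leaked, so HG-CRD-inner must terminate in Case~2 at such an iteration; one also checks $2\vd_M(v_s)2^j\le\text{volM}(G)$ --- so the hypothesis of Theorem~\ref{theorem:Theorem1_generalized} still holds --- using $\text{vol}_M(B)\le\text{vol}_M(G)/2$, and that the $\tau$-threshold in the stopping test is calibrated to fire at this iteration rather than earlier (Assumption~1 rules out an earlier internal bottleneck, as explained below). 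Theorem~\ref{theorem:Theorem1_generalized} then gives $\phi_M(A)=O(k\phi)$ for $A=\{v:\vd_M(v)\le\vm_M(v)\}$, which is the third bullet (the sweep cut actually returned is only better), together with $\vm_M(v)\le 2\vd_M(v)$ on $A$ and $\vm_M(v)\le\vd_M(v)$ off $A$. The first bullet is then immediate: every $v\in A\setminus B$ has $\vm_M(v)\ge\vd_M(v)$, so $\text{volM}(A\setminus B)=\sum_{v\in A\setminus B}\vd_M(v)\le\sum_{v\in\bar B}\vm_M(v)=L=O(k/\sigma)\,\text{volM}(B)$, and dividing by $k-1$ gives $\text{vol}_M(A\setminus B)\le O(k/\sigma)\,\text{vol}_M(B)$.

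The step I expect to be the main obstacle is the bound on $\text{vol}_M(B\setminus A)$, i.e.\ showing $B$ is nearly saturated; pure mass-counting does not suffice. The mechanism is that Assumption~1 forbids any internal bottleneck strong enough to stall the diffusion inside $B$: since $\phi=\Theta(\phi^{(S)}_M(B)/k)$, the hyperedge capacity $C=1/\phi=\Theta(k/\phi^{(S)}_M(B))$ is large enough that flow keeps spreading through $B$ until essentially all of $\text{volM}(B)$ is filled before any internal cut can trigger Case~2. I would argue by contradiction: if $T':=B\setminus A$ had $\text{volM}(T')$ larger than $O(k/\sigma)\,\text{volM}(B)$ (reducing first to $\text{volM}(T')\le\text{volM}(B)/2$), its nodes carry $\vm_M(v)<\vd_M(v)$, hence are never active and never push flow out, so the only mass they receive crosses from the active set $B\cap A$ or from $\bar B$, delivering at most $C$ per incident $T'$-endpoint; but Assumption~1 forces $|M(T',B\cap A)|\ge\Theta(k\phi)\,\text{vol}_M(T')$ and Assumption~2 bounds $|M(T',\bar B)|$ in terms of it, and feeding these into the inflow estimate shows $T'$ would have received enough flow to reach its own capacity $\text{volM}(T')$, contradicting $\vm_M(v)<\vd_M(v)$ on $T'$. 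The genuinely new difficulties, exactly those flagged in \S\ref{hg-crd--theory-diff}, are that (i) a single hyperedge can straddle the cut $(B\cap A,\,T')$ with several nodes on either side, and a push of $\Psi$ on a hyperedge deposits $\Psi$ on each of its $k-1$ non-source endpoints, so the edge-by-edge flow-conservation identity $\sum_i|E(B_i,\bar B)|\min(i,1/\phi)=\sum_i|E(S_i,\bar B)|$ of \citet{wang2017capacity} has no literal analogue and must be replaced by the group-1/group-2 hyperedge counting used in the proof of Theorem~\ref{theorem:Theorem1_generalized}, carrying the factor $k-1$ throughout; and (ii) quantifying the statement that $T'$ fills up to capacity requires re-running that level-set machinery inside $B$ under the level cap $h$. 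Once $\text{vol}_M(B\setminus A)=O(k/\sigma)\,\text{vol}_M(B)$ is established, all three bullets hold.
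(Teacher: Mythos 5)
Your treatment of the cumulative leakage bound and of the first and third bullets tracks the paper's proof: you telescope Lemma~\ref{lemma:lemma1_generalization} over the $O(\log \text{volM}(B))$ doubling rounds to get total leakage $L = O(k/\sigma)\,\text{volM}(B)$, invoke Theorem~\ref{theorem:Theorem1_generalized} (forced into Case~2 once the injected mass exceeds what $B$ can hold) for $\phi_M(A)\le O(k\phi)$, and charge $\text{volM}(A\setminus B)$ to the leaked mass. Those parts are consistent with the paper.

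The issue is the second bullet. You assert that ``pure mass-counting does not suffice'' and replace it with a contradiction argument about the inflow into $T'=B\setminus A$; but mass-counting is exactly the argument the paper uses, and it does suffice. Once Assumption~1 (with $\phi=\Theta(\phi^{(S)}_M(B)/k)$) guarantees that the inner calls keep completing in Case~1, so that the doubling drives the total injected mass up to $\Theta(\text{volM}(B))$ --- say $2\,\text{volM}(B)$ at the critical round --- conservation gives mass inside $B$ at least $2\,\text{volM}(B)-L$, while the push rule caps every node at $\vm_M(v)\le 2\vd_M(v)$ and every unsaturated node holds less than $\vd_M(v)$, so the mass inside $B$ is at most $2\,\text{volM}(B)-\text{volM}(B\setminus A)$; subtracting the two bounds yields $\text{volM}(B\setminus A)\le L = O(k/\sigma)\,\text{volM}(B)$ immediately. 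Your substitute argument, as written, has a genuine gap: to conclude that $T'$ ``would have received enough flow to reach its own capacity'' you need a \emph{lower} bound on the flow actually delivered into $T'$, yet the only quantitative statement you give is that crossing hyperedges deliver ``at most $C$ per incident $T'$-endpoint,'' which is an upper bound and cannot force saturation. Producing the needed lower bound would require rerunning the entire level/excess analysis restricted to $B$ --- which you flag as difficulty~(ii) but do not carry out --- so on your route the second bullet is not established. I would revert to the conservation argument; its only real prerequisite is the claim you already make elsewhere, namely that Assumption~1 prevents an internal bottleneck from triggering Case~2 or the $\tau$-test before the total mass reaches $\Theta(\text{volM}(B))$.
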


\begin{proof}
As $\phi \geq \frac{\Omega(\phi_M(B))}{k}$ and from theorem 1, we have $\phi_M(A) \leq O(k \phi)$, therefore we will get $\phi_M(A) \leq \phi_M(B)$ and therefore the diffusion will not stuck on any bottleneck subset inside B and will be able to spread the mass all over B.

Before all nodes in B are saturated, the leakage according to lemma 2 is $O(\frac{k}{\sigma \text{log volM}(B)})$ and we need $\text{log volM}(B)$ iterations to saturate all nodes in B and therefore the leakage to $\bar{B}$ in all iterations is $O(\frac{k}{\sigma})$ fraction of the total mass in the graph before and after saturating the nodes in B. 

After saturating nodes in B, we will run constant number of iterations before terminating and therefore at termination the total mass will be after removing excess of flow is $\theta(\text{volM}(B))$ because at $t=\text{log volM}(B)$, the termination condition will be $\tau 2  d_M(v_s) \text{volM}(B)$ and after $\text{log volM}(B)$ all the nodes in B are saturated and the leakage is $O(\frac{\text{volM}(B)}{\sigma})$ therefore, the total mass will be $\leq 2 \frac{1+\sigma}{\sigma}  \text{volM}(B)$ and therefore the total mass is less than the termination condition by choosing appropriate $\tau$. Hence, the upper bound for $\text{vol}_M(A \setminus B)$ is: $\text{vol}_M(A \setminus B) \leq  (k-1)\text{m}_M(A \cap \bar{B}) \leq  O(\frac{k}{\sigma})\text{vol}_M(B), $ which concludes the proof of case 1. For case 2, we get: $ \text{vol}_M(B \setminus A) = \text{vol}_M(B \cap \bar{A}) \leq O(\frac{k}{\sigma}) \text{vol}_M(B) ,$ as the total leakage is $O(\frac{\text{volM}(B)}{\sigma})$.
\end{proof}

\begin{fullwidth}
\section*{Acknowledgments}
\footnotesize Both authors would like to acknowledge our colleagues who
provided initial feedback on this paper. This work is supported by NSF awards IIS-1546488, CCF-1909528 NSF Center for Science of Information STC, CCF-0939370, DE-SC0014543, NASA, and the Sloan Foundation.

\bibcolumns=3
\bibliographystyle{dgleich-bib}
\bibliography{acmart}
\end{fullwidth}

\clearpage 

\end{document}